\theoremstyle{plain}
\newtheorem{theorem}{Theorem}[section]
\newtheorem{corollary}[theorem]{Corollary}
\newtheorem{lemma}[theorem]{Lemma}
\theoremstyle{definition}
\newtheorem{definition}[theorem]{Definition}
\newtheorem{remark}[theorem]{Remark}
\newtheorem{example}[theorem]{Example}
\numberwithin{equation}{section}
\let\frontmatter\relax
\def\l@subsection{\@tocline{2}{0pt}{2.5pc}{5pc}{}}
\renewcommand{\tocsection}[3]{%
  \indentlabel{\@ifnotempty{#2}{\bfseries\ignorespaces#1 #2.~}}\bfseries#3}
\renewcommand{\tocsubsection}[3]{%
  \indentlabel{\@ifnotempty{#2}{\ignorespaces#1 #2.~}}#3}
\renewcommand{\section}{\@startsection
{section}
{1}
{\z@}
{-\baselineskip}
{0.8\baselineskip}
{\centering\scshape\large}} 
\renewcommand{\subsection}{\@startsection
{subsection}
{2}
{\z@}
{-0.8\baselineskip}
{0.5\baselineskip}
{\normalfont \bf \normalsize}} 
\renewcommand{\subsubsection}{\@startsection
{subsubsection}
{3}
{\z@}
{-0.8\baselineskip}
{0.5\baselineskip}
{\normalfont \it \normalsize}} 
\newcommand{\eprint}[1]{\href{http://arxiv.org/abs/#1}{\texttt{arXiv\string:\allowbreak#1}}}
\newcommand{\bea}{\begin{eqnarray}}
\newcommand{\eea}{\end{eqnarray}}
\newcommand{\beq}{\begin{equation}}
\newcommand{\eeq}{\end{equation}}
\definecolor{darkgreen}{rgb}{0.1, 0.8, 0.1}
\begin{document}
\frontmatter

\title{A New Class of Higher Quantum Airy Structures as Modules of $\mathcal{W}(\mathfrak{gl}_r)$-Algebras}

\author{Vincent Bouchard}
\address{Department of Mathematical \& Statistical Sciences,
	University of Alberta, 632 CAB\\
	Edmonton, Alberta, Canada T6G 2G1}
\email{vincent.bouchard@ualberta.ca}

\author{Kieran Mastel}
\address{Pure Mathematics\\
University of Waterloo\\
200 University Avenue West\\
Waterloo, Ontario, Canada N2L 3G1}
\email{kmastel@uwaterloo.ca}

\begin{abstract}
	Quantum $r$-Airy structures can be constructed as modules of $\mathcal{W}(\mathfrak{gl}_r)$-algebras via restriction of twisted modules for the underlying Heisenberg algebra. In this paper we classify all such higher quantum Airy structures that arise from modules twisted by automorphisms of the Cartan subalgebra consisting of products of disjoint cycles of the same length. An interesting feature of these higher quantum Airy structures is that the dilaton shifts must be chosen carefully to satisfy a matrix invertibility condition, with a natural choice being roots of unity. We explore how these higher quantum Airy structures may provide a definition of the Chekhov, Eynard, and Orantin topological recursion for reducible algebraic spectral curves. We also study under which conditions quantum $r$-Airy structures that come from modules twisted by arbitrary automorphisms can be extended to new quantum $(r+1)$-Airy structures by appending a trivial one-cycle to the twist without changing the dilaton shifts.
\end{abstract}

\maketitle

\section{Introduction}

The topological recursion of Chekhov, Eynard, and Orantin \cite{CE,EO,EO2} is a formalism that can be used to solve various enumerative problems involving Riemann surfaces. It relies on the geometry of a spectral curve, which is realized as a branched covering of $\mathbb{P}^1$, and constructs generating functions for enumerative invariants via residue analysis on the spectral curve.

Kontsevich and Soibelman introduced the concept of quantum Airy structures in \cite{kontsevich2017airy,ABCD} as an algebraic reformulation (and generalization) of the topological recursion. A quantum Airy structure is a set of second-order partial differential operators $H_i$ that satisfy certain specific conditions. The key result is that, under these conditions, there always exists a unique solution to the differential constraints $H_i Z = 0$, where $Z$ has the specific form of a generating function. It can be shown that the topological recursion of Chekhov, Eynard, and Orantin can be reformulated as a special case of quantum Airy structures \cite{kontsevich2017airy,ABCD,borot2017lecture}.

Meanwhile, the original topological recursion of Chekhov, Eynard, and Orantin was also generalized in \cite{BE,BE2,BHLMR}. In the original formulation, the spectral curve was required to be a branched covering of $\mathbb{P}^1$ with only simple ramification points. The restriction was removed in \cite{BE,BE2,BHLMR}, to allow for branched coverings with arbitrary ramifications. However, the resulting topological recursion is not a special case of quantum Airy structures anymore as originally formulated by Kontsevich and Soibelman.

This conundrum was resolved in \cite{borot2018higher}, where the concept of higher quantum Airy structures was introduced. The main difference with the original quantum Airy structures of Kontsevich and Soibelman is that the restriction that the differential operators are second-order is relaxed, and differential operators of arbitrary order are considered. In particular, a large class of higher quantum Airy structures was constructed in \cite{borot2018higher} as modules for $\mathcal{W}(\mathfrak{g})$-algebras, where $\mathfrak{g}$ is a Lie algebra. It was then shown that the generalized topological recursion of \cite{BE,BE2,BHLMR} can be reformulated as a special case of higher quantum Airy structures realized as modules of $\mathcal{W}(\mathfrak{gl}_r)$-algebras.

Those modules of $\mathcal{W}(\mathfrak{gl}_r)$-algebras that take the form of higher quantum Airy structures were constructed in \cite{borot2018higher} by restricting twisted modules for the Heisenberg VOA associated to the Cartan subalgebra $\mathfrak{h}$ of $\mathfrak{gl}_r$ (see also \cite{BakalovMilanov1, BakalovMilanov2, Milanov} for related work). As such, the construction relies on a choice of automorphism $\sigma$ of $\mathfrak{h}$. In principle, arbitrary automorphisms can be considered. But it is not  \emph{a priori} obvious that the resulting modules may take the form of quantum higher Airy structures. While arbitrary automorphisms were briefly considered in \cite{borot2018higher}, the paper mostly focused on the case where $\sigma$ is either the automorphism induced by the Coxeter element of the Weyl group (\emph{i.e.} it permutes all basis vectors of $\mathfrak{h}$ cyclically) -- see Theorem 4.9 in \cite{borot2018higher} -- or the case where $\sigma$ permuted all but one basis vectors of $\mathfrak{h}$ -- see Theorem 4.16. From the point of view of enumerative geometry, the former are related to various flavours of (closed) intersection theory on $\overline{\mathcal{M}}_{g,n}$ (or variants thereof), while the latter are related to open intersection theory.

A natural question then is to attempt to classify all higher quantum Airy structures that arise as modules of $\mathcal{W}(\mathfrak{gl}_r)$-algebras via the construction above, for abitrary automorphisms $\sigma$. This is the main motivation behind the current paper. At this stage, a full classification remains out of reach. But we make a small step in this direction in Section \ref{s:class1}. We classify all higher quantum Airy structures that arise from automorphisms $\sigma$ consisting of products of disjoint cycles of the same length. This is achieved in Theorem \ref{thm:main1}. 

One byproduct of Theorem \ref{thm:main1} as that the dilaton shifts cannot simply be $1$ anymore, as in \cite{borot2018higher}. In Section \ref{s:examples} we study a particularly interesting class of examples of the quantum $r$-Airy structures constructed in Theorem \ref{thm:main1}, where the dilaton shifts are taken to be roots of unity. This is perhaps the most natural way to achieve the invertibility requirement stated in Theorem \ref{thm:main1}.

It is also interesting to investigate the connection of these higher quantum Airy structures with topological recursion. The generalized topological recursion of \cite{BE,BE2,BHLMR} is a special case of higher quantum Airy structures for $\mathcal{W}(\mathfrak{gl}_r)$ obtained by taking $\sigma$ to be the automorphism induced by the Coxeter element of the Weyl group. What should the higher quantum Airy structures constructed from other choices of automorphisms correspond to? The natural guess is that they should produce a further generalization of topological recursion for spectral curves that are formulated as reducible algebraic curves. While we do not pursue this research direction in detail in this paper, we briefly explore this potential interpretation in Section \ref{s:interpretation}, focusing on the higher quantum Airy structures obtained in Section \ref{s:examples} with dilaton shifts being roots of unity. It should be possible to use these higher quantum Airy structures to formulate such a generalization of topological recursion in terms of residue analysis on reducible spectral curves. We hope to probe this research direction further in the near future.

We also study whether the idea of Theorem 4.16 of \cite{borot2018higher} can be generalized further. In essence, what Theorem 4.16 is saying is that, given a quantum $r$-Airy structure constructed as a $\mathcal{W}(\mathfrak{gl}_r)$-module for $\sigma$ the fully cyclic automorphism (Theorem 4.9), one can always construct a new quantum $(r+1)$-Airy structure as a $\mathcal{W}(\mathfrak{gl}_{r+1})$-module by ``appending'' to $\sigma$ a trivial one-cycle, with no extra dilaton shift. As the higher quantum Airy structures of Theorem 4.9 are related to closed intersection theory, and those of Theorem 4.16 to open intersection theory, this could be understood as some sort of open/closed correspondence.

In Section \ref{s:class2} we prove under which conditions this idea of ``appending a one-cycle'' works. Namely, we start with any quantum $r$-Airy structure constructed from an arbitrary automorphism $\sigma$, and we prove under which condition appending a one-cycle, with no extra dilaton shift, gives rise to a new quantum $(r+1)$-Airy structure (see Theorem \ref{t:appending}). While the enumerative geometric side of the story is unknown at this stage, if these arbitrary quantum $r$-Airy structures do have an interpretation in terms of closed intersection theory, the new ones may be expected to provide an open version of the enumerative geometric problem. 

We conclude with future directions in Section \ref{s:future}. In particular, it would be very interesting to investigate whether these higher quantum Airy structures for more general automorphisms $\sigma$ have an enumerative geometric interpretation, and whether ``appending one-cycles'' as in Section \ref{s:class2} leads to a general open/closed correspondence statement.

\subsubsection*{Remark}

Shortly after this paper was submitted to arXiv, a paper by Borot, Kramer and Sch\"uler was submitted to arXiv \cite{BKS}, in which they also study the classification of higher quantum Airy structures obtained as modules of $\mathcal{W}(\mathfrak{gl}_{r+1})$-algebras. Their classification goes beyond the one presented in the current paper. They also propose a precise formulation of  the corresponding topological recursion on reducible spectral curves that we discuss in Section \ref{s:interpretation}.

\subsection*{Acknowledgements}

We thank Gaetan Borot, Nitin K. Chidambaram, and Thomas Creutzig for discussions, as well as Devon Stockall and Quinten Weller for collaboration in the initial stages of this project. The authors acknowledge the support of the Natural Sciences and Engineering Research Council of Canada (NSERC); in particular, the work of K.M. was partly supported by a NSERC Undergraduate Student Research Award.

\section{Background}

In this section we review the concept of higher quantum Airy structures, and the construction of \cite{borot2018higher}, in order to motivate the problem addressed in this paper. We note that this paper is not meant to be self-contained: we simply recall the definitions and results that are necessary for the remainder of the paper. For the detailed construction of higher quantum Airy structures as modules of $\mathcal{W}$-algebras, the reader should consult \cite{borot2018higher}.

\subsection{Higher Quantum Airy Structures}

We start by defining higher quantum Airy structures. For simplicity, we only state the explicit, basis-dependent definition:

\begin{definition}[\cite{borot2018higher}, Definition 2.6]\label{d:airy}
Let $V$ be a vector space over $\mathbb{C}$. Let $(e_i)_{i \in I}$ be a basis of $V$, and $(x_i)_{i \in I}$ the corresponding basis of linear coordinates, where $I$ is an appropriate index set. Let $\hbar$ denote a formal parameter. Let $\mathcal{D}^\hbar$ be the completed algebra of differential operators on $V$, and define an algebra grading by assigning
\begin{equation}\label{eq:grading}
    \text{deg }x_l = \text{deg }\hbar\partial_{x_l} = 1, \hspace{10mm} \text{deg }\hbar = 2.
\end{equation}
A \emph{higher quantum Airy structure} on $V$ is a family of differential operators $(H_k)_{k \in I}$ such that:
\begin{enumerate}
\item They take the form 
\begin{equation}\label{eq:form}
    H_k = \hbar\partial_{x_k}-P_k 
\end{equation}
where $P_k \in \mathcal{D}^\hbar$ is a sum of terms of degree $\geq 2$ (this is a condition on the degree $0$ and $1$ terms of the $H_k$).
\item There exist $g^{k_3}_{k_1,k_2} \in \mathcal{D}^\hbar$ such that 
\begin{equation}\label{eq:subalgebra}
	\left[H_{k_1},H_{k_2}\right] = \hbar\sum_{k_3 \in I}g^{k_3}_{k_1,k_2}H_{k_3}.
\end{equation}
\end{enumerate}
We define a \emph{quantum $r$-Airy structure} as a higher quantum Airy structure such that all $P_k$ only have terms of degree $\leq r$.
\end{definition}

\subsection{Higher Quantum Airy Structures as Modules of \texorpdfstring{$\mathcal{W}(\mathfrak{gl}_r)$}{W(glr)}-algebras}

In this paper we focus on higher quantum Airy structures that are obtained as modules of $\mathcal{W}(\mathfrak{gl}_r)$-algebras, following the construction of \cite{borot2018higher}. We will not go through the details of this construction here, but simply highlight its main features. The reader should supplement this paper with a careful reading of \cite{borot2018higher}. 

Before we describe these higher quantum Airy structures we make the following combinatorial definition.
\begin{definition}\label{d:lambdagood}
For $r \in \mathbb{Z}^+$ we define a \emph{$\lambda$-good index set}
\begin{equation}
    \Lambda_r := \left\{(i,m) \in \mathbb{Z}_{\geq0}^2:1\leq i\leq r,\;\; m \geq i-\lambda(i)\right\}
\end{equation}
where we have defined
\begin{equation*}
	\lambda(i) = \text{min}\{s|\lambda_1+\cdots+\lambda_s\geq i\}
\end{equation*}
and $\lambda =(\lambda_1,\lambda_2,\ldots,\lambda_l)$ with $\lambda_1 \geq \lambda_2 \geq \ldots \geq \lambda_l \geq 1$ is an integer partition of $r$.
\end{definition}

With this definition out of the way, we can highlight the main features of the construction of \cite{borot2018higher}. The starting point is the realization that the $\mathcal{W}(\mathfrak{gl}_r)$-algebra with central charge $c=r$ is strongly freely generated by exactly $r$ vectors $|v_i \rangle$, $i=1,2,\ldots,r$, in the Heisenberg VOA associated to the Cartan subalgebra $\mathfrak{h}$ of $\mathfrak{gl}_r$, with conformal weights $1,2,\ldots,r$. The idea then is to construct a module $Y$ for the $\mathcal{W}(\mathfrak{gl}_r)$-algebra such that an appropriate subset of the modes of the fields $H^i(z) = Y(|v_i \rangle, z)$ takes the form of a higher quantum Airy structure.

More precisely, the construction is carried through the following steps (see Section 4 in \cite{borot2018higher}). Let $\mathfrak{h}$ be the Cartan subalgebra of $\mathfrak{gl}_r$, and $\sigma$ be an element of the Weyl group of $\mathfrak{gl}_r$. Let $| v_i \rangle$, $i=1,2,\ldots,r$ be the strong, free, generators of the $\mathcal{W}(\mathfrak{gl}_r)$-algebra with central charge $c=r$.

\begin{enumerate}
	\item We construct a $\sigma$-twisted module $\mathcal{T}$ of the Heisenberg VOA associated to $\mathfrak{h}$. Upon restriction to the $\mathcal{W}(\mathfrak{gl}_r)$-algebra (which is a sub-VOA of the Heisenberg VOA), the module becomes untwisted. The underlying vector space of $\mathcal{T}$ is the space of formal series in countably many variables $x_1, x_2, \ldots$, and elements of $\mathcal{W}(\mathfrak{gl}_r)$ act as differential operators (of order at most $\text{rank}(\mathfrak{gl}_r) = r$) in the $x_k$s.
	\item We denote by $W^i(z) = \mathcal{T}(| v_i \rangle, z)$ the fields of the strong, free, generators of the $\mathcal{W}(\mathfrak{gl}_r)$-algebra. We pick a subset of the modes $W^i_m$ of these fields, such that $(i,m) \in \Lambda_r$ for some partition $\lambda$ of $r$. We call such a subset \emph{$\lambda$-good}. It is shown in Section 3.3 of \cite{borot2018higher} that a subset of the modes $W^i_m$ fulfils condition (2)  in the definition of a higher quantum Airy structure (Definition \ref{d:airy}) if and only if it is $\lambda$-good for some partition $\lambda$ of $r$.
	\item For the modes $W^i_m$ to form a higher quantum Airy structure, they must also satisfy condition (1) in Definition \ref{d:airy}. This can potentially be achieved by conjugation (the so-called \emph{dilaton shift}):
	\begin{equation}\label{eq:dilaton}
		H^i_m = \hat{T}_sW^i_m\hat{T}_s^{-1}, \hspace{5mm} \hat{T}_s := \text{exp}\left(-\frac{Q\partial_{x_s}}{s}\right),
		\end{equation}
		for some integer $s$ and constant $Q$, in conjunction with potential linear combinations of modes. (Note that by the Baker-Campbell-Hausdorff formula, \eqref{eq:dilaton} is equivalent to the shift
\begin{equation}
	x_s \mapsto x_s - \frac{Q}{s} 
\end{equation}
in the modes $W^i_m$.) If condition (1) can be achieved in this way, the $H^i_m$ form a quantum $r$-Airy structure, with the index set $I = \Lambda_r$ for the chosen partition $\lambda$ of $r$.
\end{enumerate}

This construction was carried out in Section 4.1 of \cite{borot2018higher} for $\sigma$ the automorphism of the Cartan subalgebra $\mathfrak{h}$ of $\mathfrak{gl}_r$ induced by the Coxeter element of the Weyl group, which permutes cyclically all $r$ basis vectors of $\mathfrak{h}$. In this context, the main result is Theorem 4.9 of \cite{borot2018higher}, which states that, for a given $r$, the construction above does produce a unique quantum $r$-Airy structure for each choice of integer $s \in \{1,2,\ldots,r+1 \}$ such that $r = \pm 1 \text{ mod } s$. The partition $\lambda$ defining the appropriate subalgebra of modes is uniquely fixed by the choice of $s$. See Theorem 4.9 in \cite{borot2018higher} for details.

But there is no reason \emph{a priori} to focus on the automorphism $\sigma$ induced by the Coxeter element of the Weyl group: one could start with any automorphism $\sigma$ of the Weyl group. As an example, a more general case is studied in Section 4.2.2 of \cite{borot2018higher}, where $\sigma$ is chosen to permute the $r-1$ first basis vectors of $\mathfrak{h}$ and leave the last one invariant. The result is Theorem 4.16, which states that, for a given $r$, the construction does again produce a unique quantum $r$-Airy structure, but this time for each choice of integer $s \in \{ 1,\ldots,r \}$ such that $s|r$. As before, the partition $\lambda$ is uniquely fixed by the choice of $s$. In this case however, a new subtelty arises: one must consider linear combinations of the conjugated modes $H^i_m$ to ensure that the condition (1) in Definition \ref{d:airy} is satisfied. But for this particular choice of $\sigma$, this can be achieved fairly easily (see Theorem 4.16).

One may then ask the following questions. Does the construction outlined above produce quantum $r$-Airy structures for all choices of automorphisms $\sigma$ of the Weyl group? And if so, for what choices of integer $s$? And, given a choice of $\sigma$ and $s$, is the corresponding partition $\lambda$ uniquely fixed?

In other words:
\begin{quote}
	Can one classify all quantum $r$-Airy structures that can be produced as modules of $\mathcal{W}(\mathfrak{gl}_r)$-algebras via the construction above?
\end{quote}
It turns out that the main difficulty in producing such a classification lies in step (3). It is straightforward to construct the $\sigma$-twisted module (and its restriction to the $\mathcal{W}(\mathfrak{gl}_r)$-algebra) in step (1) for an arbitrary automorphism $\sigma$: in fact, this is already done in Section 4.2.1 of \cite{borot2018higher}. As for step (2), the classification of the subsets of modes that satisfy condition (2) in Definition \ref{d:airy}) is already completed, as it is a purely algebraic property: it does not depend on the particular choice of $\mathcal{W}(\mathfrak{gl}_r)$-module. As mentioned above, the result is that the subset of modes satisfies condition (2) if and only if it is $\lambda$-good with respect to some partition $\lambda$ of $r$. 
What is tricky is to show that we can bring all modes $W^i_m$ in a chosen $\lambda$-good subset in a form that satisfies the condition (1) of Definition \ref{d:airy} via conjugation and linear combinations, \emph{i.e.} step (3). This is rather non-trivial.

We can be a little more explicit. One can think of condition (1) of Definition \ref{d:airy} as having three parts:
\begin{enumerate}[(a)]
	\item All operators have no degree $0$ terms;
	\item All operators have no degree $1$ terms that are coordinates $x_k$s;
	\item The degree one terms are all of the form $\hbar \partial_k$, and all derivatives $\hbar \partial_k$ appear exactly once in the degree $1$ term of an operator.
\end{enumerate}
If we have achieved conditions (a) and (b) by conjugation of the modes $W^i_m$, then what remains to be checked is that condition (c) can be satisfied by taking linear combinations of the conjugated modes. If the algebra was finitely generated, then this problem would be equivalent to the problem of determining invertibility of a finite-dimensional matrix. However, the subsets of modes that we are considering are infinite-dimensional. We are thus faced with the problem of determining invertibility of an infinite-dimensional matrix (via countably infinite elementary row operations).

The problem of inverting infinite-dimensional matrices is in general quite difficult. However, if the matrix is block-diagonal, then we may invert it if and only if the blocks are invertible, which drastically simplifies the problem.

In this paper we provide a classification of quantum $r$-Airy structures that can be obtained via the method above for a class of automorphisms $\sigma$ such that the resulting invertibility problem is block-diagonal. More specifically, we consider the case where $\sigma \in S_r$ is an automorphism of $\mathfrak{h}$ which is a product of disjoint cycles of the same length, and classify all resulting modules that take the form of quantum $r$-Airy structures. We also generalize Theorem 4.16 of \cite{borot2018higher}, by studying under which conditions higher quantum Airy structures constructed from arbitrary automorphisms do produce new higher quantum Airy structures by ``appending a one-cycle'' with no extra dilaton shift.

However, a full classification of quantum $r$-Airy structures obtained as modules of $\mathcal{W}(\mathfrak{gl}_r)$-algebra via the recipe above for arbitrary automorphism $\sigma$ remains out of reach for the moment being. We hope to come back to this in the near future.


\section{Higher Quantum Airy Structures for \texorpdfstring{$\sigma$}{sigma} a Product of  Disjoint Cycles of the Same Length}

\label{s:class1}

In this section we provide a classification of higher quantum Airy structures that can be obtained as modules of $\mathcal{W}(\mathfrak{gl}_r)$-algebras via the construction of the previous section, with the automorphism $\sigma \in S_r$ consisting of products of disjoint cycles of the same length. For our purposes, only the cycle structure of $\sigma$ matters.

We make heavy use of the construction of \cite{borot2018higher}. In particular, Lemma 4.15 in Section 4.2.1, which expresses the modes $W^i_m$ of the fields associated to the generators of the $\mathcal{W}(\mathfrak{gl}_r)$-algebra in terms of the Heisenberg modes, is our starting point.

\subsection{Notation and Previous Results}

Let us start by fixing notation. We consider $\mathcal{W}(\mathfrak{gl}_r)$. We write $\sigma = \prod_{j=1}^n \sigma_j \in S_r$ for the automorphism used to construct the twisted module of the underlying Heisenberg VOA, where each $\sigma_j$ is a cycle of length $\rho$, with $n \rho = r$.

In the construction of \cite{borot2018higher}, there is a set of bosonic modes associated to each cycle of $\sigma$, and a corresponding set of coordinates. We denote by $K^j_m$, $j \in \{1, \ldots, n\}$, $m \in \mathbb{Z}$, the bosonic modes associated to the cycle $\sigma_j$, and we introduce the quantization
\begin{equation}
K^j_0 = \hbar^{1/2} C_j, \qquad K^j_m = \hbar \partial_{x^j_m}, \qquad K^j_{-m} = m x^j_m, \qquad m \in \mathbb{Z}^+, 
\end{equation}
where the $C^j$ are constants (see Remark 4.14 in \cite{borot2018higher} for the appearance of the factor of $\hbar^{1/2}$).

Lemma 4.15 in \cite{borot2018higher} gives an explicit expression for the modes $W^i_m$, $i \in \{1,\ldots,r\}$, $m\in\mathbb{Z}$, of the fields associated to the generators of the $\mathcal{W}(\mathfrak{gl}_r)$-algebra in terms of the Heisenberg modes, as a result of the outlined construction for arbitrary automorphisms $\sigma$. For our choice of automorphism $\sigma = \prod_{j=1}^n \sigma_j$, the modes take the form
\begin{equation}\label{eq:modes1}
	W^i_m= \frac{1}{\rho^i} \sum_{M \subseteq \{1, \ldots, n \} } \rho^{|M|} \sum_{\substack{1 \leq i_j \leq \rho, \ j \in M \\ \sum_{j \in M}i_j=i}}\sum_{\substack{m_j\in\mathbb{Z},\ j\in M\\\sum_jm_j=m+1-|M|}} \prod_{j\in M} W^{j,i_j}_{m_j},
\end{equation}
where the $W_{m_j}^{j, i_j}$, $j \in \{1,\ldots,n\}$, $i_j \in \{1,\ldots, \rho\}$, $m_j \in \mathbb{Z}$,  are the modes of the $\mathcal{W}(\mathfrak{gl}_{\rho})$-module constructed from the automorphism $\sigma_j$ induced by the Coxeter element of the Weyl group. Those are written in terms of the bosonic modes $K^j_m$ as:
\begin{equation}\label{eq:modes2}
	W_{m}^{j, i} = \frac{1}{\rho}\sum_{\ell=0}^{\lfloor i/2 \rfloor} \frac{i! \hbar^\ell}{2^\ell \ell! (i-2 \ell)!} \sum_{\substack{p_{2 \ell+1}, \ldots, p_i \in \mathbb{Z} \\ \sum_k p_k = \rho(m-i+1)}} \Psi^{(\ell)}_\rho (p_{2\ell + 1}, \ldots, p_ i) :\prod_{k=2\ell+1}^i K^j_{p_k} :.
\end{equation}
We refer the reader to Definition 4.3 of \cite{borot2018higher} (and Section 4.2.1) for the definition of the $\Psi^{(\ell)}_\rho$.

\subsection{Dilaton Shifts}
\label{s:dilaton}

\eqref{eq:modes1} and \eqref{eq:modes2} together express the modes $W^i_m$ in terms of the bosonic modes $K^j_m$. What remains to be shown is that we can find dilaton shifts, and possible linear combinations of modes, so that there exists a $\lambda$-good subset of modes (for some partition $\lambda$ of $r$) that satisfies condition (1) of Definition \ref{d:airy} in order to be a quantum higher Airy structure. For simplicity we will restrict to the case where we apply the same dilaton shift to all sets of bosonic modes associated to the $n$ cycles $\sigma_j$ of the automorphisms $\sigma$.

First, we recall from the proofs of Theorems 4.9 and 4.16 in \cite{borot2018higher} that if we do the dilaton shift
\begin{equation}
	K^j_{-s} \mapsto K^j_{-s} - Q_j
\end{equation}
in the modes $W^{j,i}_m$ of the $\mathcal{W}(\mathfrak{gl}_\rho)$-module constructed from $\sigma_j$, we obtain the resulting operators
\begin{equation}	
	H^{j,i}_m = - Q_j^i \delta_{\rho (m-i+1) + s i} + Q_j^{i-1} K^j_{\rho m - (\rho - s)(i-1)} + \mathcal{O}(2), 
\end{equation}
where by $\mathcal{O}(2)$ we mean terms of order $\geq 2$ according to the algebra grading \eqref{eq:grading}.
The first term is of degree zero, while the second term is of degree one.

We may already restrict to the case where $s$ is coprime with $\rho$.\footnote{Of course, in the case with $\rho=1$, this is trivial, as all integers $s$ are coprime with $1$.} Indeed, let $d = \text{GCD}(\rho,s)$. Then the only modes $K^j_q$ that appear in the degree one terms of $H_m^{j,i}$ have $q$ divisible by $d$. So it will never be possible to achieve the degree condition \eqref{eq:form} for a quantum $r$-Airy structure if $s$ is not coprime with $\rho$, as some derivatives $\hbar \partial_{x^j_m}$ will never appear in the degree one terms.

We thus assume from now on, without loss of generality, that $s$ is coprime with $\rho$. This implies that the degree zero term will be non-zero if and only if $i=\rho$ and $m=\rho - s - 1$. So we can rewrite the shifted operators as
\begin{equation}\label{eq:shifted}
	H^{j,i}_m = - Q_j^i \delta_{i,\rho} \delta_{m,\rho-s-1} + Q_j^{i-1} K^j_{\rho m - (\rho - s)(i-1)} + \mathcal{O}(2). 
\end{equation}

With this under our belt we can prove the following lemma.

\begin{lemma}\label{l:shifts}
	Consider the modes $W^i_m$, $i \in \{1,\ldots, r\}$,  $m \in \mathbb{Z}$, of the $\mathcal{W}(\mathfrak{gl}_r)$-module constructed from an automorphism $\sigma = \prod_{j=1}^n \sigma_j$, where all $\sigma_j$ are disjoint cycles of length $\rho$, with $n \rho = r$. The expressions for the $W^i_m$ in terms of the bosonic modes are given by \eqref{eq:modes1} and \eqref{eq:modes2}.

	Apply the same dilaton shifts
    \begin{equation}
        K^j_{-s} \rightarrow K^j_{-s}-Q_j, \qquad j \in \{1, \ldots, n\},
    \end{equation}
    for all sets of bosonic modes on the operators $W^i_m$ to get new operators $H^i_m$. Here the $Q_j$ are some (potentially zero) constants, with $s$ an integer coprime with $\rho$. Then the resulting operators take the form:
	\begin{align}
		H^{k + l \rho}_m =&\frac{1}{\rho^{k+l \rho - l - 1}}\left[  \delta_{k,\rho} \delta_{m, (l+1)(\rho-s)-1} \left( \sum_{\substack{M \subseteq \{1,\ldots, n\} \\ |M| = l+1}} \prod_{j \in M} (-Q_j^{\rho}) \right) \right. \nonumber \\
		&\left. +\sum_{\mu=1}^n K^\mu_{\rho (m - l (\rho-s)) - (\rho-s)(k - 1)} Q_\mu^{k-1} \left(\sum_{\substack{M \subseteq \{ 1, \ldots \hat{\mu}, \ldots, n \} \\ |M| = l } } \prod_{j \in M} (- Q_j^\rho) \right) \right ]\nonumber\\
	& + \mathcal{O}(2),
	\label{eq:leading}
	\end{align}
	where $k \in \{1,2,\ldots, \rho \}$, $l \in \{0,1,\ldots, n-1\}$, $m \in \mathbb{Z}$, and we used the standard notation that $\{1 ,\ldots, \hat{\mu}, \ldots, n \}$ stands for the set $\{1, \ldots, n\}$ with the number $\mu$ omitted.
\end{lemma}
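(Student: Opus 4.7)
The strategy is a direct substitution and bookkeeping argument: insert the leading-term expansion \eqref{eq:shifted} for each shifted single-cycle operator $H^{j,i_j}_{m_j}$ into the master formula \eqref{eq:modes1}, multiply out, and keep only those terms whose total degree (in the grading \eqref{eq:grading}) is at most one. Every factor $H^{j,i_j}_{m_j}$ has a degree-zero piece $-Q_j^{i_j}\delta_{i_j,\rho}\delta_{m_j,\rho-s-1}$, a degree-one piece $Q_j^{i_j-1}K^j_{\rho m_j-(\rho-s)(i_j-1)}$, and $\mathcal{O}(2)$. In a product $\prod_{j\in M}H^{j,i_j}_{m_j}$, the degree of a term in the expansion is the sum of the degrees of the chosen pieces, so the only contributions of total degree $\leq 1$ come from (i) picking the degree-zero piece in every factor, or (ii) picking the degree-one piece in exactly one factor $\mu\in M$ and the degree-zero piece in every other factor of $M$. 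I will analyse each case separately and then assemble the result, writing $i=k+l\rho$ with $k\in\{1,\ldots,\rho\}$ and $l\in\{0,\ldots,n-1\}$.

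\textbf{Case (i), degree zero.} Requiring $i_j=\rho$ and $m_j=\rho-s-1$ for every $j\in M$ imposes $\sum_{j\in M}i_j=\rho|M|=i$, which forces $k=\rho$ and $|M|=l+1$. The condition $\sum_j m_j=m+1-|M|$ then becomes $(l+1)(\rho-s-1)=m-l$, i.e.\ $m=(l+1)(\rho-s)-1$. Summing $\rho^{|M|}/\rho^i=\rho^{l+1}/\rho^{(l+1)\rho}=\rho^{-(l+1)(\rho-1)}$ times $\prod_{j\in M}(-Q_j^\rho)$ over all subsets $M\subseteq\{1,\ldots,n\}$ of size $l+1$ produces the first line of \eqref{eq:leading} (note that $-(l+1)(\rho-1)=-(k+l\rho-l-1)$ when $k=\rho$).

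\textbf{Case (ii), degree one.} Suppose the single degree-one factor is indexed by $\mu\in M$, with the remaining $j\in M\setminus\{\mu\}$ forced to $i_j=\rho$ and $m_j=\rho-s-1$. Then $i=i_\mu+\rho(|M|-1)$, and since $i_\mu\in\{1,\ldots,\rho\}$ and $i=k+l\rho$ with $k\in\{1,\ldots,\rho\}$, the only possibility is $|M|=l+1$ and $i_\mu=k$. The constraint $\sum_j m_j=m+1-|M|=m-l$ then yields $m_\mu=m-l(\rho-s)$, so the degree-one factor contributes $Q_\mu^{k-1}K^\mu_{\rho(m-l(\rho-s))-(\rho-s)(k-1)}$. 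Its coefficient is $\rho^{l+1}/\rho^{k+l\rho}=1/\rho^{k+l\rho-l-1}$, multiplied by $\prod_{j\in M\setminus\{\mu\}}(-Q_j^\rho)$. To sum, I first fix $\mu\in\{1,\ldots,n\}$ and then sum over all subsets $M\setminus\{\mu\}$ of size $l$ in $\{1,\ldots,\hat\mu,\ldots,n\}$, reproducing the second line of \eqref{eq:leading}.

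\textbf{Main obstacle.} There is no conceptual difficulty; the work is entirely combinatorial. The only point to be careful about is verifying the two cases are mutually exclusive and exhaustive among the degree-$\leq 1$ contributions, and matching the index-range constraints $1\leq i_j\leq\rho$ with the decomposition $i=k+l\rho$ to rule out all other values of $|M|$. Once this is done, collecting the prefactors and the subset sums gives \eqref{eq:leading} verbatim; everything else lies in the $\mathcal{O}(2)$ remainder.
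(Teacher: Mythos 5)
Your proof is correct and follows essentially the same route as the paper's: substitute the shifted single-cycle leading terms \eqref{eq:shifted} into \eqref{eq:modes1}, isolate the all-degree-zero and exactly-one-degree-one contributions, and use the constraints $1\leq i_j\leq\rho$ together with $\sum_j i_j=i$ and $\sum_j m_j=m+1-|M|$ to force $|M|=l+1$, $i_\mu=k$, $m_\mu=m-l(\rho-s)$. Your explicit tracking of the prefactor $\rho^{|M|}/\rho^{i}$ is a slightly more detailed version of the same bookkeeping.
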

\begin{proof}
	The operators resulting from the chosen dilaton shifts are found by replacing the $W_{m_j}^{j, i_j}$ in \eqref{eq:modes1} by the dilaton-shifted modes $H^{j,i_j}_{m_j}$ in \eqref{eq:shifted}. The result is:
	\begin{multline}
		H^i_m = \frac{1}{\rho^i} \sum_{M \subseteq \{1, \ldots, n \} } \rho^{|M|} \sum_{\substack{1 \leq i_j \leq \rho, \ j \in M \\ \sum_{j\in M}i_j=i}}\sum_{\substack{m_j\in\mathbb{Z},\ j\in M\\\sum_jm_j=m+1-|M|}} \\
		\prod_{j\in M}\left( - Q_j^{i_j} \delta_{i_j, \rho} \delta_{m_j, \rho - s - 1} + Q_j^{i_j-1} K^j_{\rho m_j - (\rho - s)(i_j-1)} + \mathcal{O}(2) \right).
		\label{eq:temp}
	\end{multline}
	We are interested in the degree zero and degree one terms in $H^i_m$.

	The degree zero term will appear when all factors in the product over $j \in M$ in \eqref{eq:temp} contribute a degree zero term. This will happen if and only if $i_j = \rho$ and $m_j = \rho - s - 1$ for all $j \in M$. Since $\sum_j i_j = i$, we conclude that this will happen only if $i = |M| \rho$ for some integer $|M|$ between $1$ and $n$. Furthermore, since $\sum_j m_j = m + 1 - |M|$, we obtain that this will happen only if $m= |M|(\rho - s) - 1$.

	It is thus appropriate to reindex the modes $H^i_m$ as $H^{k + l \rho}_m$, with $1 \leq k \leq \rho$ and $0 \leq l \leq n-1$. We then obtain
	\begin{equation}
		H^{k + l \rho}_m = \frac{1}{\rho^{(l+1)(\rho-1)}}\delta_{k,\rho} \delta_{m, (l+1)(\rho-s)-1} \left( \sum_{\substack{M \subseteq \{1,\ldots, n\} \\ |M| = l+1}} \prod_{j \in M} (-Q_j^{\rho}) \right) + \mathcal{O}(1).
	\end{equation}

	Next, we need to figure out the degree one terms. Degree one terms will appear when all factors in the product over $j \in M$ in \eqref{eq:temp} but one contribute a degree zero term. Let $\mu \in M$, and suppose that all terms with $j \in M$ and $j \neq \mu$ contribute a degree zero term. Thus $i_j = \rho$ and $m_j = \rho - s - 1$ for all $j \neq \mu$. We use the notation $i= k + l \rho$ as above to index the modes. Since $\sum_j i_j = k + l \rho$, we conclude that $i_\mu = k+ l \rho - (|M|-1) \rho$. But $i_\mu$ must satisfy $1 \leq i_\mu \leq \rho$: we conclude that $|M| = l + 1$, and hence $i_\mu = k$. Furthermore, since $\sum_j m_j = m + 1 - |M|$, we conclude that $m_\mu = m - l (\rho - s) $. Putting all this together, we obtain:
	\begin{align}
		H^{k + l \rho}_m =&\frac{1}{\rho^{k+l \rho - l - 1}}\left[  \delta_{k,\rho} \delta_{m, (l+1)(\rho-s)-1} \left( \sum_{\substack{M \subseteq \{1,\ldots, n\} \\ |M| = l+1}} \prod_{j \in M} (-Q_j^{\rho}) \right) \right. \nonumber \\
		&\left. +\sum_{\mu=1}^n K^\mu_{\rho (m - l (\rho-s)) - (\rho-s)(k - 1)} Q_\mu^{k-1} \left(\sum_{\substack{M \subseteq \{ 1, \ldots \hat{\mu}, \ldots, n \} \\ |M| = l } } \prod_{j \in M} (- Q_j^\rho) \right) \right ]\nonumber\\
		& + \mathcal{O}(2).
	\end{align}

\end{proof}

A direct corollary of this lemma is the following:

\begin{corollary}\label{c:sub}
Consider the modes $H^{k+l  \rho}_m$, $k \in \{1,\ldots,\rho\}$, $l \in \{0,1,\ldots,n-1\}$, $m\in \mathbb{Z}$ from Lemma \ref{l:shifts}.
	If we restrict to the subset of modes $H^{k+l \rho}_m$ with
	\begin{equation}
		m \geq l (\rho - s ) + k-1 - \left \lfloor \frac{s}{\rho}(k-1) \right \rfloor + \delta_{k,1}, \qquad s \geq 1,
	\end{equation}
	then the $H^{k+l \rho}_m$ have no degree zero terms, and the degree one terms all involve only bosonic modes $K^\mu_j$ with $j \geq 1$ (i.e. only derivatives $\hbar \partial_{x^\mu_j}$).

\end{corollary}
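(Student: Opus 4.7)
The plan is to read off the two requirements—vanishing of the degree zero term and positivity of the bosonic mode indices in the degree one terms—directly from the closed form \eqref{eq:leading} of Lemma \ref{l:shifts}, and verify that both hold automatically under the stated lower bound on $m$. The cases $k=\rho$ and $k<\rho$ will need separate handling for the degree zero part, and the cases $k=1$ and $k\geq 2$ will need separate handling for the degree one part.

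Turning to the degree one terms first: the only bosonic modes appearing in \eqref{eq:leading} are $K^\mu_j$ with $j = \rho(m-l(\rho-s))-(\rho-s)(k-1)$, so to ensure $j\geq 1$ (for all $\mu$) I will impose
\begin{equation*}
    m \geq l(\rho-s) + \left\lceil \frac{(\rho-s)(k-1)+1}{\rho}\right\rceil .
\end{equation*}
Rewriting $(\rho-s)(k-1) = \rho(k-1) - s(k-1)$ reduces the ceiling to $(k-1) - \lfloor (s(k-1)-1)/\rho\rfloor$. The coprimality of $s$ with $\rho$ is the decisive input here: for $2\leq k\leq \rho$ the integer $s(k-1)$ is not divisible by $\rho$ (since $0<k-1<\rho$ forces $\rho\nmid(k-1)$, and $\gcd(s,\rho)=1$), so subtracting one does not cross an integer threshold and the floor simplifies to $\lfloor s(k-1)/\rho\rfloor$. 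This matches the bound stated in the corollary (with $\delta_{k,1}=0$). For $k=1$ the ceiling equals $\lceil 1/\rho\rceil = 1$, which matches the bound with $k-1=0$, $\lfloor 0\rfloor = 0$, and $\delta_{k,1}=1$.

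For the degree zero term, \eqref{eq:leading} shows that a nonzero contribution can only occur when $k=\rho$, at the critical value $m^\ast = (l+1)(\rho-s)-1$. Plugging $k=\rho$ into the claimed bound gives $m \geq l(\rho-s) + (\rho-1) - \lfloor s(\rho-1)/\rho\rfloor$; and since $s(\rho-1)/\rho < s$ whenever $s\geq 1$, we have $\lfloor s(\rho-1)/\rho\rfloor \leq s-1$, so the bound forces $m \geq l(\rho-s) + \rho - s > m^\ast$, killing the degree zero contribution. For $k<\rho$ there is no degree zero term to worry about. The only delicate bookkeeping will be the floor/ceiling identity and the $k=1$ versus $k\geq 2$ split that explains the $\delta_{k,1}$ correction; everything else is a direct readoff from \eqref{eq:leading}.
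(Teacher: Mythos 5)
Your proposal is correct and follows the same route as the paper, which simply asserts that the corollary ``follows by direct inspection of \eqref{eq:leading}''; you have filled in the inspection explicitly, and the key steps (the ceiling-to-floor conversion $\lceil ((\rho-s)(k-1)+1)/\rho\rceil = (k-1)-\lfloor (s(k-1)-1)/\rho\rfloor$, the use of $\gcd(s,\rho)=1$ to drop the $-1$ for $k\geq 2$, the $k=1$ origin of the $\delta_{k,1}$, and the check that the bound excludes the critical value $m^\ast=(l+1)(\rho-s)-1$ at $k=\rho$) all check out. This is a faithful, correctly detailed version of the paper's one-line argument.
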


\begin{proof}
	This follows by direct inspection of \eqref{eq:leading}.
\end{proof}

We will focus on subsets of modes satisfying this condition from now on. We record for future use the form of the modes in this case, without the degree zero terms:
\begin{equation}\label{eq:nozero}
	H^{k + l \rho}_m =\frac{1}{\rho^{k+l \rho - l - 1}} \sum_{\mu=1}^n K^\mu_{\rho (m - l (\rho-s)) - (\rho-s)(k - 1)} Q_\mu^{k-1} \left(\sum_{\substack{M \subseteq \{ 1, \ldots \hat{\mu}, \ldots, n \} \\ |M| = l } } \prod_{j \in M} (- Q_j^\rho) \right)+ \mathcal{O}(2).
\end{equation}

What remains to be shown is twofold. First, that this subset of modes is a $\lambda$-good subalgebra, for some partition $\lambda$ of $r$ (so that condition (2) of Definition \ref{d:airy} is satisfied). Second, that there exist linear combinations of the $H_m^{k+l \rho}$ that satisfy condition (1) of Definition \ref{d:airy}.

\subsection{Linear Combinations of Operators}

We address the second condition first. Looking at the linear terms in \eqref{eq:nozero}, we see that for a fixed value of $q$, the linear terms $K^\mu_q$, for $\mu=1,\ldots,n$, all appear together in the same operators. This is key. What it means is that we are in a block-diagonal case. In other words, in order to show that there exist linear combinations of the operators $H^{k+l \rho}_m$ that satisfy condition (1) of Definition \ref{d:airy}, all we have to do is determine whether the finite-dimensional matrices of coefficients corresponding to the block of modes where the $K^\mu_q$ appear (for fixed values of $q$) are invertible. The existence of a quantum $r$-Airy structure hinges on a block diagonal matrix inversion problem. This is what we now make rigorous.

\begin{definition}\label{d:shiftmatrix}
	Let $(Q_j)^n_{j=1}$ be a set of (possibly zero) constants, and let $\mu, \ell \in \{1, \ldots, n \}$. We define the $n$-by-$n$  \emph{shift matrix}:
	\begin{equation}
		M(Q_1, \ldots, Q_n)_{\mu, \ell} =\sum_{\substack{M \subseteq \{ 1, \ldots \hat{\mu}, \ldots, n \} \\ |M| = \ell-1 } } \prod_{j \in M} (- Q_j^\rho),  
	\end{equation}
	where we define $M(Q_1,\dots,Q_n)_{\mu,1} = 1$, so that if $n = 1$ we have $M(Q) = 1$ for any value of $Q$. We will use the shorthand notation $M_{\mu,\ell}$ when the dependence on the constants $Q_1, \ldots, Q_n$ is clear from context. 
\end{definition}

Using this definition, we can rewrite \eqref{eq:nozero} as:
\begin{equation}
	H^{k + l \rho}_m =\frac{1}{\rho^{k+l \rho - l - 1}} \sum_{\mu=1}^n M_{\mu, l+1} Q_\mu^{k-1} K^\mu_{\rho (m - l (\rho-s)) - (\rho-s)(k - 1)} + \mathcal{O}(2).
\end{equation}
This means that the block matrices in our block diagonal inversion problem are in fact all the same matrix $M$, with its $\mu$'th column multiplied by the constant $Q_\mu^{k-1}$. This means that for $Q_\mu^{k-1} \neq 0$ we have reduced the problem to the inversion of one finite-dimensional matrix. 

More precisely:

\begin{lemma}\label{lem:degree}
	Let $H_m^{k+l \rho}$, $k \in \{1,2,\ldots,\rho \}$, $l \in \{0,1,\ldots,n-1\}$, $m \in \mathbb{Z}$, be the operators constructed in Lemma \ref{l:shifts}. Consider the subalgebra of modes in Corollary \ref{c:sub}, with
	\begin{equation}
		m \geq l (\rho - s ) + k-1 - \left \lfloor \frac{s}{\rho}(k-1) \right \rfloor + \delta_{k,1},
	\end{equation}
	where $s \geq 1$.
	Then there exist linear combinations of the operators $H_m^{k + l \rho}$ that satisfy condition (1) of Definition \ref{d:airy} if and only if the shift matrix $M(Q_1, \ldots, Q_n)$ is invertible, and either:
	\begin{enumerate}[(a)]
		\item $\rho=1$;
		\item $\rho>1$, $s$ is coprime with $\rho$, and $Q_j \neq 0$ for all $j=1,\ldots,n$.
		\end{enumerate}
\end{lemma}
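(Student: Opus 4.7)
The plan is to exploit a block-diagonal structure and reduce the degree condition \eqref{eq:form} to the invertibility of a single finite-dimensional matrix, with one block per positive integer $q$. By Corollary \ref{c:sub}, for operators in the chosen subalgebra the degree zero terms and the coordinate monomials $x^\mu_j$ are already absent, so only condition (c) of \eqref{eq:form}---producing one derivative $\hbar\partial_{x^\mu_q}$ per linear combination, each appearing in exactly one combination---remains to be ensured. From \eqref{eq:nozero}, the degree-one part of $H^{k+l\rho}_m$ is a linear combination of $K^\mu_q$ for $\mu=1,\ldots,n$ sharing the common index $q = \rho(m-l(\rho-s)) - (\rho-s)(k-1)$, which depends only on $(k,l,m)$. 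Since modes with distinct indices are linearly independent, any linear combination isolating a single $\hbar\partial_{x^\mu_q}$ must be built from operators whose shared index equals the target $q$, so the problem decouples into one $n \times n$ system per positive integer $q$.

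Fix $q\geq 1$. Solving the index equation for $m$ forces $q \equiv s(k-1) \pmod{\rho}$, so if $\gcd(s,\rho)>1$ (only possible for $\rho>1$) some residue classes of $q$ admit no operator, preventing the isolation of the corresponding derivatives; this accounts for the coprimality requirement in case (b), while the condition is vacuous in case (a). Under coprimality, $k\mapsto s(k-1) \bmod \rho$ is a bijection $\{1,\ldots,\rho\}\to\{0,1,\ldots,\rho-1\}$, so each $q$ determines a unique $k_0 = k_0(q) \in \{1,\ldots,\rho\}$ and then $l$ ranges freely over $\{0,\ldots,n-1\}$, producing exactly $n$ candidate operators and $n$ unknowns $\{K^\mu_q\}_{\mu=1}^{n}$. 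A direct computation, which I expect to be the main non-conceptual step, shows that the subalgebra lower bound in Corollary \ref{c:sub} coincides exactly with the condition $q\geq 1$ for the degree-one index, so for every $q\geq 1$ all $n$ candidate operators lie in the subalgebra and every operator in the subalgebra is accounted for in some block.

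Using \eqref{eq:nozero} and Definition \ref{d:shiftmatrix}, the coefficient of $K^\mu_q$ in the candidate operator with indices $(k_0, l)$ is $\rho^{-(k_0+l\rho-l-1)}\,M_{\mu,l+1}\,Q_\mu^{k_0-1}$, yielding the factorization
\[
A(q) \,=\, D_1(k_0)\, M^T\, D_2(k_0),
\]
where $D_1(k_0)$ is diagonal with nonzero entries $\rho^{-(k_0+l\rho-l-1)}$ (hence always invertible) and $D_2(k_0)$ is diagonal with entries $Q_\mu^{k_0-1}$. Thus the existence of the required linear combinations is equivalent to invertibility of $A(q)$ for every $q\geq 1$, i.e.\ to invertibility of $M$ together with invertibility of $D_2(k_0)$ for every $k_0$ that arises. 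For $\rho=1$ only $k_0=1$ occurs and $D_2(1) = I$ is automatically invertible, leaving the sole requirement that $M$ be invertible. For $\rho>1$ every $k_0\in\{1,\ldots,\rho\}$ occurs, so some $k_0\geq 2$ is present, and $D_2(k_0)$ is invertible precisely when $Q_\mu\neq 0$ for all $\mu$; combined with the coprimality of $s$ and $\rho$ established above, this yields case (b).
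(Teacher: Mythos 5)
Your proposal is correct and follows essentially the same route as the paper: both reduce condition (c) of \eqref{eq:form} to a block-diagonal inversion problem with one $n\times n$ block per $q\geq 1$, identify each block as the shift matrix with its $\mu$-th column scaled by $Q_\mu^{k-1}$ (your explicit factorization $D_1 M^T D_2$ just makes this cleaner), and derive the coprimality and $Q_j\neq 0$ requirements from the reachability of all residue classes of $q$ and the nonvanishing of the column scalings. The only step you defer --- that the bound of Corollary \ref{c:sub} matches $q\geq 1$ exactly --- is asserted at the same level of detail in the paper itself.
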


\begin{proof}
	We have already seen in Corollary \ref{c:sub} that the modes in the specified subset have no degree zero terms, and that the degree one terms are all derivatives $\hbar \partial_{x^\mu_m}$. What remains to be shown is that there exist linear combinations of the modes such that all derivative operators $\hbar \partial_{x^\mu_m}$ appear exactly once in the degree one terms.

	Consider first the case $\rho=1$. Then the operators read:
\begin{equation}
	H^{1 + l }_m = \sum_{\mu=1}^n M_{\mu, l+1} K^\mu_{m - l (1-s)} + \mathcal{O}(2), 
\end{equation}
with $m \geq l(1-s) +1$. We then notice that, for any fixed value of $q \geq 1$, the modes $K^\mu_q$ appear together in the $n$ modes $H^{1+l}_{q+l(1-s)}$, all of which are in the specified subset. For any $q$, the matrix of coefficients is precisely the shift matrix $M$. Therefore, there exist linear combinations of the operators such that all bosonic modes $K^\mu_q$ appear exactly once in the linear terms if and only if the shift matrix $M$ is invertible.

Now consider the case $\rho > 1$.  The operators read:
\begin{equation}
	H^{k + l \rho}_m =\frac{1}{\rho^{k+l \rho - l - 1}} \sum_{\mu=1}^n M_{\mu, l+1} Q_\mu^{k-1} K^\mu_{\rho (m - l (\rho-s)) - (\rho-s)(k - 1)} + \mathcal{O}(2),
\end{equation}
with 
\begin{equation}\label{eq:moo}
	m \geq l (\rho - s ) + k-1 - \left \lfloor \frac{s}{\rho}(k-1) \right \rfloor + \delta_{k,1}.
\end{equation}
An argument similar to the case $\rho=1$ holds here as well. For any fixed value of $q \geq 1$, the modes $K^\mu_q$ appear together in exactly $n$ modes, with the matrix of coefficients given by $M_{\mu, l+1} Q_\mu^{k-1}$. Indeed, fix a $q \geq 1$, and consider the modes $H_m^k$ (with $l=0$). Then, one can always find a unique choice of $m$ and $k$ such that the modes $K^\mu_q$ appear in the linear terms $H_m^k$ by solving the equation $\rho m - (\rho - s) (k-1) = q$, if and only if $s$ is coprime with $\rho$ (otherwise it would only be possible for a subset of $q$'s that are multiple of $\text{GCD}(\rho,s)$, see the discussion at the beginning of Section \ref{s:dilaton}). But then the same modes $K^\mu_q$ also appear in the linear terms of the operators $H^{k + l \rho}_{m + l (\rho-s)}$ for all $l \in \{0,\ldots,n-1\}$. Furthermore, the inequality \eqref{eq:moo} is precisely such that all these modes are included in the specified subset. As a result, assuming that all $Q_\mu \neq 0$, we can always find linear combinations of the operators such that all bosonic modes $K^\mu_q$ appear exactly once in the linear terms if and only if the shift matrix $M$ is invertible and $s$ is coprime with $\rho$.

We assumed however that all $Q_\mu \neq 0$. Is that a necessary condition? What happens if some of the constants $Q_\mu$ vanish? Assume that there is a vanishing constant, which we take to be $Q_1$, without loss of generality. Then the operators with $k > 1$ become
\begin{equation}
	H^{k + l \rho}_m =\frac{1}{\rho^{k+l \rho - l - 1}} \sum_{\mu=2}^n M_{\mu, l+1} Q_\mu^{k-1} K^\mu_{\rho (m - l (\rho-s)) - (\rho-s)(k - 1)} + \mathcal{O}(2).
\end{equation}
Thus the modes $K^1_q$ only appear in the operators $H^{1 + l \rho}_m$ (with $k=1$). But only the modes $K^1_q$ with $q$ divisible by $\rho$ appear in these operators. Therefore, in the case $\rho>1$, it is necessary for all the $Q_\mu$ to be non-zero for all the bosonic modes to appear in the linear terms.
\end{proof}

\subsection{Classification Theorem}

All that remains is to determine whether the specified subsets are $\lambda$-good for some choice of integer partition $\lambda$ of $r$. The result is the main theorem of this section. We only consider the case with $n \geq 2$ (i.e. automorphisms $\sigma$ with more than one cycle), as the case $n=1$ corresponds to the quantum $r$-Airy structures constructed in Theorem 4.9 of \cite{borot2018higher}.

\begin{theorem}\label{thm:main1}
	Let $H_m^{k+l \rho}$, $k \in \{1,2,\ldots,\rho \}$, $l \in \{0,1,\ldots,n-1\}$, $m \in \mathbb{Z}$, $n \geq 2$, and $r = n \rho$, be the operators constructed in Lemma \ref{l:shifts} (that is, they are constructed as restrictions of twisted modules of the Heisenberg algebra, where the twist is given by the automorphism $\sigma = \prod_{j=1}^n \sigma_j$ with the $\sigma_j$ disjoint cycles of length $\rho$). Consider the subset of modes in Corollary \ref{c:sub}, with
	\begin{equation}
		m \geq l (\rho - s ) + k-1 - \left \lfloor \frac{s}{\rho}(k-1) \right \rfloor + \delta_{k,1},
	\end{equation}
	where $s \geq 1$. 
	Then there exist linear combinations of the operators $H_m^{k + l \rho}$ that form a quantum $r$-Airy structure if and only if $\sum_{j=1}^n K^j_0 = 0$, the shift matrix $M(Q_1, \ldots, Q_n)$ is invertible, and one of the following conditions is satisfied:
	\begin{enumerate}[(a)]
		\item $\rho=1$, $s=1$, any number $n$ of $1$-cycles;
		\item $\rho > 1$, $s=1$, any number $n$ of $\rho$-cycles, and $Q_j \neq 0$ for all $j=1,\ldots,n$;
		\item $\rho=1$, $s=2$, $n = 2$ (two $1$-cycles);
		\item $\rho >1$, $\rho$ is odd, $s=2$, $n = 2$ (two $\rho$-cycles), and  $Q_j \neq 0$ for all $j=1,\ldots,n$.
	\end{enumerate}
\end{theorem}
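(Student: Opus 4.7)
The plan is to stack two ingredients: Lemma \ref{lem:degree}, which handles the degree-one normalization, and a combinatorial analysis of when the bound of Corollary \ref{c:sub} cuts out a $\lambda$-good subalgebra for some partition $\lambda$ of $r = n\rho$. First I would invoke Lemma \ref{lem:degree}, which contributes: invertibility of the shift matrix $M(Q_1,\dots,Q_n)$; coprimality of $s$ and $\rho$ when $\rho > 1$; and $Q_j \neq 0$ for all $j$ when $\rho > 1$. These reproduce exactly the degree-one hypotheses listed in the theorem, and can be quoted directly.

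Next I would translate the bound of Corollary \ref{c:sub} into the implicit partition function $\lambda(k+l\rho) := (k+l\rho) - (\text{lower bound on }m)$, obtaining
\begin{equation*}
	\lambda(k + l\rho) = ls + 1 + \lfloor s(k-1)/\rho \rfloor - \delta_{k,1}.
\end{equation*}
The $-\delta_{k,1}$ records the exclusion of the modes $H^{1+l\rho}_{l(\rho-s)}$, whose degree-one parts involve only the zero modes $K^\mu_0$. For $l = 0$ this excluded mode is $H^1_0$, with degree-one part $\sum_\mu M_{\mu,1} K^\mu_0 = \sum_\mu K^\mu_0$, so imposing $\sum_{j=1}^n K^j_0 = 0$ kills that term and effectively reinstates $H^1_0$ with $\lambda(1) = 1$, as any valid partition requires. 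This simultaneously identifies the candidate partition $\lambda$ and the trace constraint $\sum_j K^j_0 = 0$ appearing in the theorem.

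The heart of the argument is to decide when this candidate $\lambda$ is genuinely a partition of $r$: the multiplicities $\lambda_j := |\{i : \lambda(i) = j\}|$ must be non-increasing and sum to $n\rho$. Set $c_j := |\{k \in \{2,\dots,\rho\} : \lfloor s(k-1)/\rho\rfloor = j\}|$ for $j = 0,\dots,s-1$, so $\sum_j c_j = \rho - 1$. Within each block $l$ the value $ls + j + 1$ has multiplicity $c_j$; at the boundary $v = ls$ for $1 \leq l \leq n-1$ the $k=1$ mode of block $l$ contributes an extra $+1$; and the value $v = 1$ picks up $+1$ from the trace adjustment. Translating non-increasingness of $\lambda_j$ into constraints on the $c_j$'s yields within-block decrease $c_0 \geq c_1 \geq \cdots \geq c_{s-1}$, a boundary-to-within inequality $c_0 \leq c_{s-1} + 1$, and an internal within-to-boundary inequality $c_{s-2} \geq c_{s-1} + 1$ that arises whenever some boundary $v = ls \geq 3$ has a non-adjusted predecessor. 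For $s \geq 3$ this last inequality is always present (even at $n = 2$) and collapses the system to $c_0 = \cdots = c_{s-2} = c_{s-1} + 1$; summing gives $s(c_{s-1} + 1) = \rho$, which violates coprimality unless $s = 1$. For $s = 2$, the only boundary below $v = 2s$ is at $v = 2$, whose predecessor $v = 1$ is itself trace-adjusted, so the within-to-boundary inequality becomes redundant; a direct check gives the partition $\lambda = ((\rho+1)/2, (\rho+1)/2, (\rho-1)/2, (\rho-1)/2)$ precisely when $\rho$ is odd and $n = 2$, while $n \geq 3$ reintroduces an un-adjusted boundary at $v = 4$ that then fails. The degenerate case $\rho = 1$ (where $k \equiv 1$) is handled separately: monotonicity of $\lambda(1 + l) = ls$ directly forces $s = 1$ for any $n \geq 2$ or $s = 2$ with $n = 2$. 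Assembling these surviving configurations gives the four cases (a)--(d), and conversely in each one writes down the explicit partition (for example $\lambda = (\rho+1, \rho^{n-2}, \rho-1)$ in case (b)) to verify $\lambda$-goodness directly.

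\textbf{Main obstacle.} The principal difficulty is the combinatorial step: cleanly tracking how the trace adjustment and each boundary $+1$ modify the non-increasing inequalities on the $c_j$'s, and leveraging the Beatty-type profile $c_j \in \{\lfloor(\rho-1)/s\rfloor, \lceil(\rho-1)/s\rceil\}$ together with coprimality to eliminate $s \geq 3$. The subtle point is that the pair $(n, s) = (2, 2)$ survives precisely because both the ``bottom'' and ``top'' of the value sequence enjoy either a trace or an end-of-last-block exemption from the internal boundary inequality; larger $n$ or larger $s$ always forces at least one un-exempted internal boundary, which coprimality cannot accommodate.
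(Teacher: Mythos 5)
Your proposal is correct and follows essentially the same route as the paper: quote Lemma \ref{lem:degree} for the degree condition, reinstate $H^1_0$ via the constraint $\sum_j K^j_0=0$, and then test $\lambda$-goodness by requiring the value multiplicities of $\lambda(k+l\rho)=ls+1+\lfloor s(k-1)/\rho\rfloor-\delta_{k,1}$ to be non-increasing, which for $s\geq 3$ collapses to $s(c_{s-1}+1)=\rho$ and contradicts coprimality. Your multiplicities $c_j$ are just the Beatty-dual description of the paper's $A_j=\lceil \rho j/s\rceil-\lceil\rho(j-1)/s\rceil$, and the case analysis (explicit partitions for (a)--(d), failure of $n\geq 3$ at $s=2$, separate treatment of $\rho=1$) matches the paper's proof step for step.
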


\begin{remark}
	Before we prove this classification theorem, let us mention that the result is perhaps a little bit unexpected. In the case with $n=1$ considered in Section 4.1 of \cite{borot2018higher}, there are much larger choices of $s$ that give rise to quantum $r$-Airy structures, namely any $s \in \{1, 2, \ldots, r+1 \}$ such that $r = \pm 1 \text{ mod } s$. One could have expected a similar range of possibilities here. However, it appears to be  much more constrained when $ n \geq 2$. Basically, the operators can form a quantum $r$-Airy structure only for $s=1$, with the exception of the case with $n=2$ (\emph{i.e.} two cycles) where $s=2$ can also work.
\end{remark}

\begin{proof}
	We first prove that cases (a)-(d) form quantum $r$-Airy structures. We then prove that these are the only possibilities.

	In all cases (a)-(d), we know from Lemma \ref{lem:degree} that condition (1) of Definition \ref{d:airy} is satisfied. What we need to check is that the subset of modes is $\lambda$-good  for some choice of partition $\lambda$ of $r$, from which we can conclude that condition (2) of Definition \ref{d:airy} is satisfied, and therefore that the operators form a quantum $r$-Airy structure.

	\begin{enumerate}[(a)]
		\item For $\rho=1$ and $s=1$, we must have $k=1$, and the operators read
\begin{equation}
	H^{1 + l}_m = \sum_{\mu=1}^n M_{\mu, l+1} K^\mu_{m} + \mathcal{O}(2),
\end{equation}
with $l \in \{0,1,\ldots,n-1\}$, and
\begin{equation}
	m \geq 1.
\end{equation}
We need to check whether this subset is $\lambda$-good for some partition $\lambda$ of $r = n$. It clearly is not, since any $\lambda$-good subset must include the operator $H^1_0$.\footnote{Indeed, recall that for a partition $\lambda$ of $r$, the index set $\Lambda_r$ is defined as  $\Lambda_r = \left\{(i,m) \in \mathbb{Z}_{\geq0}^2:1\leq i\leq r,\;\; m \geq i-\lambda(i)\right\}$. In particular, since for any partition $\lambda(1) = 1$, we must have $(1,0) \in \Lambda_r$, and hence $H^1_0$ must be in the $\lambda$-good subset.} But if we add the operator $H_0^1$ to the subset, then it is straightforward to check that the subset is $\lambda$-good for the partition 
\begin{equation}
	\lambda = (2, 1, \ldots, 1)
\end{equation}
of $r$, as it corresponds to the subset of modes given by
\begin{equation}
	m \geq 1+l - \lambda(1+l) =  1 - \delta_{l,0}.
\end{equation}
However, the extra operator $H^1_0 = \sum_{\mu=1}^n K^\mu_0$ does not satisfy condition (1) of Definition \ref{d:airy}, and hence we must require that $\sum_{\mu=1}^n K^\mu_0 = 0$ so that the operator identically vanishes.

\item We now consider the case $\rho>1$ and $s=1$. The operators read:
\begin{equation}
	H^{k + l \rho}_m =\frac{1}{\rho^{k+l \rho - l - 1}} \sum_{\mu=1}^n M_{\mu, l+1} Q_\mu^{k-1} K^\mu_{\rho (m - l (\rho-1)) - (\rho-1)(k - 1)} + \mathcal{O}(2),
\end{equation}
with
\begin{equation}
	m \geq k + l \rho - (l+1)- \left \lfloor \frac{k-1}{\rho} \right \rfloor + \delta_{k,1} = k + l \rho - (l+1) + \delta_{k,1},	
\end{equation}
where the equality follows since $k \in \{1,2,\ldots,\rho\}$.
We need to check whether this subset is $\lambda$-good for some partition $\lambda$ of $r$. As before, it clearly is not, since $H^1_0$ is not included. But after adding $H^1_0$ to the subset, it becomes $\lambda$-good with respect to the partition
\begin{equation}
	\lambda = (\rho+1, \rho, \ldots, \rho, \rho-1)
\end{equation}
of $r$. Indeed, for this partition we have $\lambda(k+l \rho) = l+1 - \delta_{k,1}+\delta_{k,1} \delta_{l,0}$, and thus the $\lambda$-good subset of modes is given by 
\begin{equation}
	m \geq k+l \rho - \lambda(k+l \rho) = k + l \rho - (l+1) + \delta_{k,1} - \delta_{k,1} \delta_{l,0}.
\end{equation}
As $H^1_0 = \sum_{\mu=1}^n K^\mu_0$ does not satisfy condition (1), we require that $\sum_{\mu=1}^n K^\mu_0=0$ so that the operator identically vanishes.

\item We now consider the case $\rho=1$ and $s=2$. Then $k=1$, and the operators read:
\begin{equation}
	H^{1 + l}_m = \sum_{\mu=1}^n M_{\mu, l+1} K^\mu_{ m + l  } + \mathcal{O}(2),
\end{equation}
with
	\begin{equation}
		m \geq -l + 1,
	\end{equation}
	for $l \in \{0,\ldots, n-1\}$. To get a $\lambda$-good subset, all $m$ must be at least non-negative (see Section 3.3 of \cite{borot2018higher}, for instance the beginning of the proof of Theorem 3.16, and also Proposition 3.14: the largest $\lambda$-good subset corresponds to the partition $\lambda=(r)$ of $r$, and it consists of all modes with $m \geq 0$).
	Thus we cannot get a $\lambda$-good subset for $n > 2$, as the specified subset contains negative modes, since $l \in \{0,\ldots,n-1\}$. Now consider $n=2$. In this case, this is not quite a $\lambda$-good subset, but if we include, as usual, the mode $H^1_0$, we get the $\lambda$-good subset corresponding to the partition
	\begin{equation}
		\lambda = (2)
	\end{equation}
	of $r =n=2$.
	Thus we get a quantum $2$-Airy structure if we impose that $H^1_0 = \sum_{\mu=1}^2 K^\mu_0 = 0$.
	
\item We consider $\rho > 1$ and $s=2$. We note that $\rho$ must be odd, as it is coprime with $s=2$. The operators read:
\begin{equation}
	H^{k + l \rho}_m =\frac{1}{\rho^{k+l \rho - l - 1}} \sum_{\mu=1}^n M_{\mu, l+1} Q_\mu^{k-1} K^\mu_{\rho (m - l (\rho-2)) - (\rho-2)(k - 1)} + \mathcal{O}(2),
\end{equation}
with 
	\begin{equation}
		m \geq k + l \rho - (2 l+1 ) - \left \lfloor \frac{2}{\rho}(k-1) \right \rfloor + \delta_{k,1},
	\end{equation}
	for $k \in \{1,2,\ldots,\rho\}$ and $l \in \{0,\ldots,n-1\}$. We need to determine whether these subsets are $\lambda$-good. 
	
	Consider first the case $n=2$. To get a $\lambda$-good subset, as usual we need to include the mode $H^1_0$. Thus we must require that $H^1_0=\sum_{\mu=1}^2 K^\mu_0 = 0$. With this mode included, the subset is $\lambda$-good for the partition $\lambda$ of $r = 2 \rho$ given by (recall that $\rho$ is odd):
	\begin{equation}
	\lambda = \left( \frac{\rho+1}{2}, \frac{\rho+1}{2}, \frac{\rho-1}{2}, \frac{\rho-1}{2} \right).
	\end{equation}
	Indeed, for this partition $\lambda(k+l \rho) = 2 l + 1 + \left \lfloor \frac{2}{\rho}(k-1) \right \rfloor - \delta_{k,1} + \delta_{k,1} \delta_{l,0}$, and hence the subset of modes is determined by the condition
	\begin{equation}
	m \geq k+ l \rho - \lambda(k+l \rho) = k + l \rho - ( 2 l + 1) - \left \lfloor \frac{2}{\rho}(k-1) \right \rfloor + \delta_{k,1} - \delta_{k,1} \delta_{l,0}.
	\end{equation}
	
	For $n \geq 3$ however, one can show that there is no partition of $r$ that gives rise to the desired subset of modes. Indeed, what we are trying to find is a partition $\lambda = (\lambda_1,\ldots, \lambda_a)$ of $r$, with $\lambda_1 \geq \lambda_2 \geq \ldots \geq \lambda_a \geq 1$, such that
	\begin{equation}
		\lambda(k +l \rho) =2 l + 1 + \left \lfloor \frac{2}{\rho}(k-1) \right \rfloor - \delta_{k,1}+\delta_{k,1} \delta_{l,0},
	\end{equation}
	for $k\in \{1,2,\ldots,\rho\}$ and $l \in \{0,\ldots,n-1\}$. We start with $l=0$. For $k \in \{1, \ldots, \frac{\rho+1}{2} \}$, the condition is that $\lambda(k) = 1 $. Continuing with  $k \in \{ \frac{\rho+3}{2}, \ldots, \rho \}$, we get $\lambda(k) = 2$. This tells us that the first part of the partition $\lambda$ should be $\lambda_1 = \frac{\rho+1}{2}$. Continuing with $l=1$, for $k=1$ we get $\lambda(1+\rho) = 2$, for $k \in \{2, \ldots, \frac{\rho+1}{2} \}$, we get $\lambda(k + \rho) = 3$, and for $k \in  \{ \frac{\rho+3}{2}, \ldots, \rho \}$ we get $\lambda(k + \rho) = 4$. This tells us that the second part of the partition $\lambda$ should be $\lambda_2 = \frac{\rho+1}{2}$, and the third part should be $\lambda_3 = \frac{\rho-1}{2}$. But then, continuing with $l=2$, for $k=1$ we get $\lambda(1+2 \rho) = 4$, for $k \in \{2, \ldots, \frac{\rho+1}{2} \}$, we get $\lambda(k + 2 \rho) = 5$, and for $k \in  \{ \frac{\rho+3}{2}, \ldots, \rho \}$ we get $\lambda(k + 2\rho) = 6$. This tells us that the fourth part of the partition $\lambda$ should be $\lambda_4 = \frac{\rho+1}{2}$, which is a contradiction, since $\lambda_4 \geq \lambda_3$. 
We conclude that the chosen subset is not $\lambda$-good for $n \geq 3$, and thus we cannot get a quantum $r$-Airy structure for $n \geq 3$; only the $n=2$ case survives.
	\end{enumerate}

	Now that we have proved that cases (a)-(d) form quantum $r$-Airy structures, what remains is to show that these are the only possibilities. In other words, we want to show that the specified subsets of modes for other choices of $s \geq 1$ are not $\lambda$-good. 
	
	First, from Lemma \ref{lem:degree}, we know that condition (1) of Definition \ref{d:airy}  is satisfied if and only if the shift matrix $M$ is invertible, and either $\rho=1$, or $\rho>1$, in which case $s$ must be coprime with $\rho$ and $Q_j \neq 0$ for all $j=1,\ldots,n$. 

	Consider first the case $\rho=1$ (in which case $k=1$), with a shift $s \geq 3$ coprime with $\rho$. The operators read
\begin{equation}
	H^{1 + l}_m =\sum_{\mu=1}^n M_{\mu, l+1}  K^\mu_{ m - l (1-s)} + \mathcal{O}(2),
\end{equation}
with
	\begin{equation}
		m \geq 1 - l (s-1),
	\end{equation}
	with $l \in \{0,1,\ldots,n-1\}$.
	We know that all $m$ must be at least non-negative to get a $\lambda$-good subset (as mentioned before, see Section 3.3 of \cite{borot2018higher},  in particular the beginning of the proof of Theorem 3.16 and Proposition 3.14).
But since $n \geq 2$, this is impossible for $s \geq 3$ (even after adding the mode $H^1_0$ to the subalgebra). Therefore the only possible choices of $s$ are $s=1,2$, which were considered in cases (a) and (c).

	Consider now the case $\rho > 1$, with a shift $s \geq 3$. The operators read
\begin{equation}
	H^{k + l \rho}_m =\frac{1}{\rho^{k+l \rho - l - 1}} \sum_{\mu=1}^n M_{\mu, l+1} Q_\mu^{k-1} K^\mu_{\rho (m - l (\rho-s)) - (\rho-s)(k - 1)} + \mathcal{O}(2),
\end{equation}
with
	\begin{equation}
		m \geq l (\rho - s ) + k-1 - \left \lfloor \frac{s}{\rho}(k-1) \right \rfloor + \delta_{k,1},
	\end{equation}
	for $k \in \{1,2,\ldots,\rho \}$ and $l \in \{0,1,\ldots,n-1 \}$.

	Let us assume first that $s > \rho$. Then some of the modes in the subset (consider for instance $k=2$ and $l=1$) will have negative $m$'s. But as mentioned above, we know that all $m$ must be at least non-negative for the subset to potentially be $\lambda$-good, and thus we must have $s < \rho$.

	The question then is: for $3 \leq s \leq \rho-1$, can we find a partition $\lambda$ of $r= n \rho$ such that
	\begin{equation}
		\lambda(k + l \rho) = s l+1 + \left \lfloor \frac{s}{\rho}(k-1) \right \rfloor - \delta_{k,1} + \delta_{k,1} \delta_{l,0} \ ? 
	\end{equation}
	We can try to build such a partition. After a tedious calculation similar to what we did above for case (d), we see that the ``partition'' would have to look like:
	\begin{equation}
	\lambda = \left( A_1, A_2, \ldots, A_{s-1}, A_s + 1, A_1-1, A_2, \ldots, A_{s-1}, \ldots \right),
	\end{equation}
	where we defined
	\begin{equation}
		A_j = \left \lceil \frac{\rho j}{s} \right \rceil - \left \lceil \frac{\rho(j-1)}{s} \right \rceil .
	\end{equation}
	But for this to be a partition, we must have that $\lambda_1 \geq \lambda_2 \geq \ldots \geq 1$. Since all $A_j$ with $j=2,\ldots,s-1$ appear before and after $A_1-1$, we must have
	\begin{equation}
		A_2 = A_3 = \ldots = A_{s-1} = A_1 - 1 = a
	\end{equation}
	for some positive integer $a$,
	and then we must also have
	\begin{equation}
		A_s + 1 = a.
	\end{equation}
	But by definition of the $A_j$, we have:
	\begin{equation}
		\sum_{j=1}^s A_j = \rho,
	\end{equation}
	and thus we get:
	\begin{equation}
		a s = \rho.
	\end{equation}
	But $s$ is coprime with $\rho$, which is a contradiction. Therefore, there is no $\lambda$-good subset of operators for $s \geq 3$ and $\rho >1$. The only possible choices are $s=1$ and $s=2$, which were considered in cases (b) and (d).

	This completes the proof of the theorem.
\end{proof}

\subsection{An Interesting Class of Examples}
\label{s:examples}

An interesting feature of Theorem \ref{thm:main1} is that the dilaton shifts $Q_j$, $j=1,\ldots,n$, cannot be taken to be simply $1$ anymore, in contrast to Theorem 4.9 in \cite{borot2018higher}. Indeed, the shift matrix $M(Q_1,\ldots,Q_n)$ must be invertible, which restricts possible choices of dilaton shifts.

There is however a natural way of ensuring invertibility of the shift matrix for all quantum $r$-Airy structures constructed in Theorem \ref{thm:main1}. The idea is to let $Q_j = \omega^j$, $j=1,\ldots,n$, where $\omega$ is a primitve $r$'th root of unity (recall that $r = n \rho$). We study this interesting class of examples in this section. They appear to be intimately connected to the geometry of reducible spectral curves, as we briefly explore in Section \ref{s:interpretation}.

Let us start by proving a simple lemma about roots of unity, which will be necessary to prove invertibility of the shift matrix.

\begin{lemma}\label{l:roots}
	Let $n \in \mathbb{Z}^+$, with $n \geq 2$, and $\mu,\ell \in \{1,2,\ldots,n \}$. Let $\theta$ be a primitive $n$-th root of unity. Then
	\begin{equation}
		\sum_{\substack{M \subseteq \{ 1, \ldots \hat{\mu}, \ldots, n \} \\ |M| = \ell-1 } } \prod_{j \in M} (- \theta^{j}) = \theta^{ \mu (\ell-1)}.
	\end{equation}
\end{lemma}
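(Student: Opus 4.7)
The plan is to recognize the left-hand side as an elementary symmetric polynomial and identify it through a generating function. Writing $e_k$ for the $k$-th elementary symmetric polynomial, the left-hand side is precisely
\[
    e_{\ell-1}\bigl(-\theta^{1},\ldots,\widehat{-\theta^{\mu}},\ldots,-\theta^{n}\bigr),
\]
so it is the coefficient of $T^{\ell-1}$ in the product $\prod_{j\ne\mu}(1-\theta^{j}T)$.

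First, I would establish the master identity $\prod_{j=1}^{n}(1-\theta^{j}T)=1-T^{n}$. This is immediate: since $\theta$ is a primitive $n$-th root of unity, $\theta^{1},\ldots,\theta^{n}$ are exactly the $n$ roots of $X^{n}-1$, so $\prod_{j=1}^{n}(X-\theta^{j})=X^{n}-1$, and substituting $X=1/T$ followed by multiplying by $T^{n}$ gives the stated product identity.

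Next I would divide out the missing factor. A direct geometric series check shows
\[
    (1-\theta^{\mu}T)\sum_{k=0}^{n-1}\theta^{k\mu}T^{k}=1-\theta^{n\mu}T^{n}=1-T^{n},
\]
using $\theta^{n\mu}=1$. Combining this with the previous identity yields
\[
    \prod_{\substack{j=1 \\ j\ne\mu}}^{n}(1-\theta^{j}T)=\frac{1-T^{n}}{1-\theta^{\mu}T}=\sum_{k=0}^{n-1}\theta^{k\mu}T^{k}.
\]

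Finally, I would read off the coefficient of $T^{\ell-1}$ on both sides. On the left it is $e_{\ell-1}(-\theta^{1},\ldots,\widehat{-\theta^{\mu}},\ldots,-\theta^{n})$, the sum in the statement; on the right it is $\theta^{\mu(\ell-1)}$. Since $\ell\in\{1,\ldots,n\}$ ensures $\ell-1\in\{0,\ldots,n-1\}$, the coefficient is well-defined and the equality follows. There is no real obstacle here: the only bookkeeping point worth flagging is the sign convention, namely that $\prod_{j}(1+a_{j}T)=\sum_{k}e_{k}(a_{1},\ldots)T^{k}$ so taking $a_{j}=-\theta^{j}$ matches the $(-\theta^{j})$ factors in the original sum exactly.
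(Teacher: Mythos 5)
Your proof is correct. It rests on the same underlying fact as the paper's — that $\theta^1,\ldots,\theta^n$ are precisely the roots of $X^n-1$ — but packages it differently. The paper invokes Vieta's formula to get the vanishing of the elementary symmetric polynomials $e_{\ell-1}(\theta^1,\ldots,\theta^n)$ for $1\leq \ell-1\leq n-1$, splits the sum according to whether $\mu\in M$ to obtain the recursion $e_{\ell-1}(\ldots\hat\mu\ldots)=-\theta^\mu\,e_{\ell-2}(\ldots\hat\mu\ldots)$, and iterates down to the empty product. You instead encode all of those relations at once in the generating function identity $\prod_{j=1}^n(1-\theta^jT)=1-T^n$, divide out the factor $(1-\theta^\mu T)$ by a geometric series, and read off the coefficient of $T^{\ell-1}$. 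Your division step is exactly the generating-function form of the paper's coefficient-by-coefficient recursion, so the two arguments are equivalent in content; what your version buys is that it treats all values of $\ell$ simultaneously and makes the range restriction $\ell-1\in\{0,\ldots,n-1\}$ transparent, at the cost of introducing the formal variable $T$. Your closing remark on the sign convention $\prod_j(1+a_jT)=\sum_k e_k T^k$ with $a_j=-\theta^j$ correctly accounts for the signs that the paper instead tracks through the factor $(-1)^{\ell-1}$ at the end of its iteration.
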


\begin{proof}
 Vieta's formula gives:
	\begin{equation}
	\sum_{\substack{M \subseteq \{ 1,  \ldots, n \} \\ |M| = \ell-1 } } \prod_{j \in M}  \theta^{ j} = 0.
	\end{equation}
	We can separate on the left-hand-side contributions from subsets that include the index $\mu \in \{1,\ldots,n\}$. Rearranging, we get:
	\begin{equation}
		\sum_{\substack{M \subseteq \{ 1, \ldots, \hat{\mu}, \ldots, n \} \\ |M| = \ell-1 } } \prod_{j \in M}  \theta^{ j}= -\theta^{ \mu} \sum_{\substack{M \subseteq \{ 1, \ldots, \hat{\mu},  \ldots, n \} \\ |M| = \ell-2 } } \prod_{j \in M}  \theta^{ j} 	.
	\end{equation}
	Doing this iteratively, we conclude that
	\begin{equation}
		\sum_{\substack{M \subseteq \{ 1, \ldots, \hat{\mu}, \ldots, n \} \\ |M| = \ell-1 } } \prod_{j \in M}  \theta^{ j}= (-1)^{\ell-1} \theta^{ \mu (\ell-1)}, 
	\end{equation}
	from which the statement of the lemma follows.
\end{proof}

An immediate corollary is the following:
\begin{corollary}\label{c:shiftroots}
	Let $\rho, n \in \mathbb{Z}^+$ with $n \geq 2$, and $\mu,\ell \in \{1, 2, \ldots,n \}$. Let $r = n \rho$, and $\omega$ be a primitive $r$-th root of unity. Let $Q_j = \omega^j$ for $j=1,2,\ldots,n$, and define $\theta = \omega^\rho$, which is a primitive $n$-th root of unity. Then the shift matrix (see Definition \ref{d:shiftmatrix})
	\begin{equation}
		M(\omega, \omega^2, \ldots, \omega^n)_{\mu, \ell} =\sum_{\substack{M \subseteq \{ 1, \ldots \hat{\mu}, \ldots, n \} \\ |M| = \ell-1 } } \prod_{j \in M} (- \omega^{\rho j }) = \omega^{\rho \mu (\ell-1)} = \theta^{\mu (\ell-1)}.  
	\end{equation}
Note that the shift matrix $M$ can be written explicitly as the $n\times n$ Vandermonde matrix:
    \begin{equation}
    M =     \begin{bmatrix}
        1 & \theta^1&\theta^2&\dots&\theta^{(n-1)}\\
        1 & \theta^2&\theta^4&\dots&\theta^{2(n-1)}\\
        \vdots& \vdots &\vdots&\ddots&\vdots\\
        1 & \theta^{(n-1)}&\theta^{2(n-1)}&\dots&\theta^{(n-1)(n-1)}\\
        1 & 1 & 1 & \dots & 1\\
        \end{bmatrix}
    \end{equation}
In particular, it is invertible.
\end{corollary}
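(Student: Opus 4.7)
The plan is to reduce this corollary essentially to a direct application of Lemma \ref{l:roots}, together with a Vandermonde determinant argument.

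First I would verify that $\theta = \omega^\rho$ is indeed a primitive $n$-th root of unity. Since $\omega$ is a primitive $r$-th root of unity with $r = n\rho$, the order of $\omega^\rho$ is exactly $r/\gcd(r,\rho) = n\rho/\rho = n$, so $\theta$ is a primitive $n$-th root of unity as claimed. This is the single algebraic fact driving everything else.

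Next I would substitute $Q_j = \omega^j$ into the definition of the shift matrix. The product $\prod_{j \in M}(-\omega^{\rho j}) = \prod_{j \in M}(-\theta^j)$ depends only on $\theta$, so the matrix entry $M_{\mu,\ell}$ is precisely the left-hand side of Lemma \ref{l:roots}. Applying that lemma immediately gives $M_{\mu,\ell} = \theta^{\mu(\ell-1)}$, which is the content of the displayed identity.

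For invertibility, I would recognize that the resulting matrix has entries $M_{\mu,\ell} = (\theta^\mu)^{\ell-1}$, i.e., it is the Vandermonde matrix on the nodes $\theta^1, \theta^2, \ldots, \theta^n$. Since $\theta$ is a primitive $n$-th root of unity, these $n$ values $\theta, \theta^2, \ldots, \theta^{n-1}, \theta^n = 1$ are pairwise distinct, so the Vandermonde determinant $\prod_{\mu < \mu'}(\theta^{\mu'} - \theta^\mu)$ is nonzero. Hence $M$ is invertible. The explicit matrix display written in the statement (with the last row of $1$'s coming from $\mu = n$, since $\theta^n = 1$) then follows by writing out the entries. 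No step looks difficult: the main obstacle, if there is one, is simply remembering that $\theta^n = 1$ when reading off the last row of the Vandermonde matrix, which could briefly obscure the standard Vandermonde structure.
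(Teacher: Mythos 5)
Your proposal is correct and follows exactly the route the paper intends: the paper presents this as an immediate consequence of Lemma \ref{l:roots} (substituting $Q_j=\omega^j$ so that $\prod_{j\in M}(-Q_j^\rho)=\prod_{j\in M}(-\theta^j)$), with invertibility read off from the Vandermonde determinant on the distinct nodes $\theta,\theta^2,\ldots,\theta^n=1$. Your added check that $\theta=\omega^\rho$ has exact order $n$ is the one small point the paper leaves implicit, and you handle it correctly.
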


The upshot is that we have constructed a large class of quantum $r$-Airy structures with interesting potential interpretations. Consider a quantum $r$-Airy structure constructed as in Theorem \ref{thm:main1}, that is, as a $\mathcal{W}(\mathfrak{gl}_r)$-module descending from a twisted module of the underlying Heisenberg algebra with the twist given by an automorphism $\sigma = \prod_{j=1}^n \sigma_j$, with each cycle $\sigma_j$ of length $\rho$. Here $r = n \rho$. Then choose the dilaton shifts $Q_j$, $j=1,\ldots,n$ to be given by $Q_j = \omega^j$, where $\omega$ is a primitive $r$-th root of unity. This choice of dilaton shifts always satisfy the invertibility condition in Theorem \ref{thm:main1}, and, assuming that we are in one of the cases specified in the theorem, we obtain a quantum $r$-Airy structure.

We will come back to a potentially interesting interpretation for this class of quantum $r$-Airy structures in Section \ref{s:interpretation}. Meanwhile, let us write down in detail one of the simplest higher quantum Airy structures in this class, to make things explicit.

\begin{example}\label{ex:4}
	We write down in detail the operators of the quantum $4$-Airy structure obtained as a module of the $\mathcal{W}(\mathfrak{gl}_4)$-algebra via restriction of a twisted module of the underlying Heisenberg algebra, where the twist results from the automorphism $\sigma = \sigma_1 \sigma_2$ with two disjoint $2$-cycles. We consider the case with $s=1$, which is part of the family (b) in Theorem \ref{thm:main1}. To satisfy the invertibility condition of the shift matrix, we take the dilaton shifts $Q_j$, $j=1,2$ to be given by $Q_1 = i$, $Q_2 = i^2 = -1$, where $i =\sqrt{-1}$, as in Corollary \ref{c:shiftroots}.

	We let $K^j_m$, $j=1,2$, be the bosonic modes associated to the two $2$-cycles, and let $W^{j,i}_m$, $j=1,2$, $i=1,2$, be the modes of the two $\mathcal{W}(\mathfrak{gl}_2)$-modules associated to the cycles $\sigma_j$, $j=1,2$. The modes $W^i_m$, $i=1,\ldots,4$ of the resulting $\mathcal{W}(\mathfrak{gl}_4)$ can be written in terms of those as:
\begin{equation}
\begin{split}
	&W^1_m = W^{1,1}_m + W^{2,1}_m,
    \\
    &W^2_m = \frac{1}{2}W^{1,2}_m + \frac{1}{2}W^{2,2}_m + \sum_{\mathclap{\substack{m_1,m_2\in\mathbb{Z}\\m_1+m_2 = m-1}}}W^{1,1}_{m_1}W^{2,1}_{m_2},
    \\
    &W^3_m = \frac{1}{2}\sum_{\mathclap{\substack{m_1,m_2\in\mathbb{Z}\\m_1+m_2 = m-1}}}\left(W^{1,1}_{m_1}W^{2,2}_{m_2} + W^{1,2}_{m_1}W^{2,1}_{m_2}\right),
    \\
    &W^4_m = \frac{1}{4}\sum_{\mathclap{\substack{m_1,m_2\in\mathbb{Z}\\m_1+m_2 = m-1}}}W^{1,2}_{m_1}W^{2,2}_{m_2},
\end{split}
\label{eq:ex1}
\end{equation}
with the subalgebra condition $m \geq \left \lfloor \frac{i+1}{2} \right \rfloor $.
We then implement the dilaton shifts:
	\begin{equation}
		K_{-1}^j \mapsto K_{-1}^j - i^j, \qquad j=1,2.
	\end{equation}
	After the shift, the modes of the two $\mathcal{W}(\mathfrak{gl}_2)$-modules become:
	\begin{equation}
    \begin{split}
	    &H^{j,1}_m = K^j_{2m},
        \\
	&H^{j,2}_m = i^j K^j_{2m-1}- \frac{(-1)^j}{2} \delta_{m,0} +\frac{1}{2}\sum_{\mathclap{\substack{p_1,p_2 \in \mathbb{Z}\\p_1+p_2 = 2(m-1)}}}\left(2\delta_{2|p_1}\delta_{2|p_2}-1\right):K^j_{p_1}K^j_{p_2}:-\frac{3\hbar}{24}\delta_{m,1},
    \end{split}
\end{equation}
for $j=1,2$. Finally, replacing the $W^{j,i}_m$ in \eqref{eq:ex1} by the shifted $H^{j,i}_m$, which implements the dilaton shifts on the $\mathcal{W}(\mathfrak{gl}_4)$-module, yields the operators of the resulting quantum $4$-Airy structure. We write down explicitly the resulting expanded form of $H^1_m$, $H^2_m$, and $H^3_m$, but leave out $H^4_m$ for brevity, as its expanded form is rather long:

\begin{align}
H^1_m =& K^1_{2m} + K^2_{2m},\\
2H^2_m =& i K^1_{2m-1} - K^2_{2m-1} + \frac{1}{2}\sum_{\mathclap{\substack{p_1,p_2 \in \mathbb{Z}\\p_1+p_2 = 2(m-1)}}}\left(2\delta_{2|p_1}\delta_{2|p_2}-1\right) \left( :K^1_{p_1} K^1_{p_2}: + :K^2_{p_1} K^2_{p_2}: \right) \nonumber \\
& + 2\sum_{\mathclap{\substack{m_1,m_2\in\mathbb{Z}\\m_1+m_2 = m-1}}} K^1_{2m_1}K^2_{2m_2}-\frac{3\hbar}{12}\delta_{m,1} ,\\
	4 H^3_m =& - K^1_{2m-2}+ K^2_{2m-2} - \frac{3 \hbar }{12} \left( K^1_{2m-4} + K^2_{2 m-4} \right) \nonumber\\
&-2 \sum_{\mathclap{\substack{m_1,m_2\in\mathbb{Z}\\m_1+m_2 = m-1}}}\hspace{5mm}K^1_{2m_1}K^2_{2m_2-1} +2  i \sum_{\mathclap{\substack{m_1,m_2\in\mathbb{Z}\\m_1+m_2 = m-1}}}\hspace{5mm}K^1_{2m_1-1}K^2_{2m_2} 
    \nonumber\\
    &+\sum_{\substack{m_1,m_2\in\mathbb{Z}\\m_1+m_2 = m-1}}\sum_{\substack{p_1,p_2 \in \mathbb{Z}\\p_1+p_2 = 2(m_2-1)}}\left(2\delta_{2|p_1}\delta_{2|p_2}-1\right)K^1_{2m_1} : K^2_{p_1}K^2_{p_2}:   \nonumber \\
    &+\sum_{\substack{m_1,m_2\in\mathbb{Z}\\m_1+m_2 = m-1}}\sum_{\substack{p_1,p_2 \in \mathbb{Z}\\p_1+p_2 = 2(m_1-1)}}\left(2\delta_{2|p_1}\delta_{2|p_2}-1\right)K^2_{2m_2}: K^1_{p_1}K^1_{p_2}:,\\
    8 H^4_m =&- i K^1_{2m-3} - K^2_{2m-3}  + \mathcal{O}(2),
\end{align}
with the subalgebra of mode given by $m \geq \left \lfloor \frac{i+1}{2} \right \rfloor $. As required by Theorem \ref{thm:main1}, we also impose that 
\begin{equation}
	H^1_0 = K^1_0 + K^2_0 = 0,
\end{equation}
but each $K^j_0$ does not have to vanish independently.

\end{example}

\subsection{Higher Quantum Airy Structures and Topological Recursion}

\label{s:interpretation}

It is interesting to try to connect the construction of higher quantum Airy structures in Theorem \ref{thm:main1} to the Chekhov, Eynard, and Orantin topological recursion. It is shown in \cite{borot2018higher} that the generalized topological recursion of \cite{BE,BE2,BHLMR} can be reformulated as a special case of higher quantum Airy structures realized as $\mathcal{W}(\mathfrak{gl}_r)$-modules, originating from twisted modules of the underlying Heisenberg algebra with the twist given by the automorphism induced by the Coxeter element of the Weyl group. Indeed, this was the original motivation for the study of higher quantum Airy structures \cite{borot2018higher}. Note that the original topological recursion of Chekhov, Eynard, and Orantin then corresponds to the special case of (quadratic) quantum Airy structures originally studied by Kontsevich and Soibelman \cite{ABCD,kontsevich2017airy}.

The higher quantum Airy structures that are relevant for the generalized topological recursion of \cite{BE,BE2,BHLMR} are those of Theorem 4.9 \cite{borot2018higher}, which are indexed by an integer $r \geq 2$ and another integer $s \in \{1,\ldots, r+1\}$ such that $r = \pm 1 \text{ mod } s$ (in particular, $r$ and $s$ are coprime). They are constructed as $\mathcal{W}(\mathfrak{gl}_r)$-modules, with dilaton shift
\begin{equation}
	K_{-s} \mapsto K_{-s} - 1.
\end{equation}
Recall that the topological recursion relies on the geometry of a spectral curve. It is shown in \cite{borot2018higher} that those quantum $r$-Airy structures encapsulate the same information as the topological recursion of \cite{BE,BE2,BHLMR} on the so-called $(r,s)$-spectral curves, which are realized as the algebraic curves
\begin{equation}
	r^{r-s} x^{r-s} y^r - (-1)^r = 0
\end{equation}
in standard polarization (for the meaning of ``standard polarization'' here, see Section 5.1 in \cite{borot2018higher}). One can also think of these spectral curves in parametric form, as being given by the two following rational functions on $\mathbb{P}^1$:
\begin{equation}
	x = \frac{z^r}{r} , \qquad y = - \frac{1}{z^{r-s}}. 
\end{equation}
Following through the steps of the correspondence established in Section 5 of \cite{borot2018higher}, one sees that the value of the dilaton shift can be extracted as follows. One constructs the one-form
\begin{equation}
	\omega_{0,1}(z) = y(z) dx(z) = - z^{s-1} dz.
\end{equation}
The index $m$ of the mode $K_m$ that is shifted should be one more than the exponent of the power of $z$ in $\omega_{0,1}$, and the shift should be the coefficient. For instance, if one considered the quantum $r$-Airy structure of Theorem 4.9 but with dilaton shift
\begin{equation}
	K_{-s} \mapsto K_{-s} - Q, \qquad Q \neq 0,
\end{equation}
it would correspond to topological recursion on the spectral curve
\begin{equation}
	x = \frac{z^r}{r} , \qquad y = - \frac{Q}{z^{r-s}},
\end{equation}
that is, on the algebraic curve
\begin{equation}
	r^{r-s} x^{r-s} y^r - (-Q)^r = 0.
\end{equation}

The coprime condition between $r$ and $s$ is crucial here: it ensures that the spectral curve, as an algebraic curve, is irreducible. The current formulation of topological recursion is only defined if the spectral curve is irreducible.

From the point of view of topological recursion, a natural question then is whether it is possible to generalize the definition of topological recursion to allow reducible algebraic spectral curves. We claim that the higher quantum Airy structures that we construct in this paper may give precisely such a generalization.

Let us be a little more precise. We consider the quantum $r$-Airy structures constructed in Theorem \ref{thm:main1}, originating from twisted modules of the underlying Heisenberg algebra with automorphisms given by $\sigma = \prod_{j=1}^n \sigma_j$, with each $\sigma_j$ a cycle of length $\rho$, and $n \geq 2$, and $r = n \rho$. We use the dilaton shifts by $r$-th roots of unity explored in Corollary \ref{c:shiftroots}:
\begin{equation}
	K^j_{-s} \mapsto K^j_{-s} - \omega^j, \qquad j=1,\ldots,n,
\end{equation}
where $\omega$ is a primitive $r$-th root of unity. According to Theorem \ref{thm:main1}, we have two choices: either $s=1$, or $s=2$, $n=2$, and $\rho$ is odd.

We claim that the case $s=1$ should give an explicit formulation of topological recursion on the following reducible algebraic spectral curve:
\begin{equation}
	\rho^{r-n} x^{r-n} y^r - (-1)^r = 0.
\end{equation}
We note that this spectral curve is certainly reducible, as $r = n \rho$, and thus $r$ and $n$ are never coprime (since $n \geq 2$). 

In fact, substituting $r = n \rho$, we can rewrite this curve in reduced form as:
\begin{equation}\label{eq:reducible}
	\rho^{r-n} x^{r-n} y^r - (-1)^r = \prod_{j=1}^n \left(\rho^{\rho-1} x^{\rho-1} y^\rho - (- \omega^j)^\rho \right)= 0 ,
\end{equation}
where $\omega$ is a primitive $r$-th root of unity. It has precisely $n$ components. It is now clear why we expect our quantum $r$-Airy structures to be connected to this spectral curve. Each component is a $(\rho,1)$-spectral curve, but with dilaton shift $Q = \omega^j$. This is precisely what our construction is doing, with each cycle $\sigma_j$ of the automorphism $\sigma$ corresponding to an irreducible component of the reducible spectral curve.

More precisely, as topological recursion is not currently defined for reducible spectral curves, our claim is that: 
\begin{quote}\emph{The quantum $r$-Airy structures of Theorem \ref{thm:main1}, with $\rho \geq 1$, $n \geq 2$, $s=1$, $r = \rho n$, and the dilaton shifts being given by $r$-th roots of unity as in Corollary \ref{c:shiftroots}, may provide a definition of topological recursion on the reducible $(r,n)$-spectral curves \eqref{eq:reducible}. (Here, for the reducible $(r,n)$-spectral curve, $n \geq 2$ and $n \mid r$.)}
\end{quote}

What about the other class of Theorem \ref{thm:main1}, with $n=2$, $s=2$, and $\rho$ odd? Following the same logic, it should define topological recursion on the reducible algebraic curves (here $r = 2 \rho$):
\begin{equation}\label{eq:reducible2}
	\rho^{r - 4} x^{r-4} y^{r} - (-1)^r = \prod_{j=1}^2 \left(\rho^{\rho-2} x^{\rho-2} y^{\rho} - (-\omega^j)^\rho \right) = 0,
\end{equation}
where $\omega$ is a primitive $2\rho$-th root of unity. In other words, we claim that:
\begin{quote}
	\emph{The quantum $r$-Airy structures of Theorem \ref{thm:main1}, with $\rho \geq 1$, $\rho$ odd, $n = 2$, $s=2$, $r=2 \rho$, and the dilaton shifts being given by $r$-th roots of unity as in Corollary \ref{c:shiftroots}, may provide a definition of topological recursion on the reducible $(r,4)$-spectral curves \eqref{eq:reducible2}. (Here, for the reducible $(r,4)$-spectral curve, $r$ is even but $4 \nmid r$.)}
\end{quote}

\begin{remark}
What is surprising however is that, with this construction, we do not recover all reducible $(r,n)$-spectral curves, but only these two particular families. It is unclear to us what is special about these families of reducible spectral curves, and why other families do not appear to have counterparts in our construction of quantum $r$-Airy structures.
\end{remark}

\begin{example}
	As an example, according to our claim, the quantum $4$-Airy structure studied in Example \ref{ex:4} should correspond to the reducible $(4,2)$-spectral curve:
\begin{equation}
    4y^{4}x^{2}-1 = \prod_{j=1}^2\left(2y^2x-(-\emph{i}^{j})^2\right) = 0
\end{equation}
\end{example}


\section{Appending 1-Cycles}
\label{s:class2}

In Section \ref{s:class1}, we provided a classification of higher quantum Airy structures that arise as modules of $\mathcal{W}(\mathfrak{gl}_r)$-algebras following the method of \cite{borot2018higher}, for arbitrary automorphisms $\sigma$ that are products of $n$ disjoint cycles of the same length. One can think of this construction as a natural generalization of Theorem 4.9 of \cite{borot2018higher}, which considers the case $n=1$ (i.e. $\sigma$ is the automorphism induced by the Coxeter element of the Weyl group).

Theorem 4.9 was also generalized in a different direction in \cite{borot2018higher}. Theorem 4.16 studied higher quantum Airy structures that can be obtained from automorphisms $\sigma$ that permute all but one of the basis vectors of the Cartan subalgebra. Moreover, in this context the extra one-cycle did not come with an extra dilaton shift. Thus, one may think of this result as follows. Given a quantum $r$-Airy structure constructed as a module of the $\mathcal{W}(\mathfrak{gl}_r)$-algebra as in Theorem 4.9, one can always construct a new quantum $(r+1)$-Airy structure as a module of the $\mathcal{W}(\mathfrak{gl}_{r+1})$-algebra by ``appending'' to it a one-cycle, with no extra dilaton shift. This is, in essence, what Theorem 4.16 is doing.

In this section we investigate the question of whether, given a quantum $r$-Airy structure constructed as $\mathcal{W}(\mathfrak{gl}_r)$-module for an arbitrary automorphism $\sigma$, we can always construct a new quantum $(r+1)$-Airy structure as a $\mathcal{W}(\mathfrak{gl}_{r+1})$-module by appending to $\sigma$ a one-cycle, with no extra dilaton shift.

For $r \in \mathbb{Z}^+$, let $\Lambda_r$ be a $\lambda$-good index set for some integer partition $\lambda$ of $r$ (see Definition \ref{d:lambdagood}).
	Let $H^i_m$, $(i,m) \in \Lambda_r$, be the operators of a quantum $r$-Airy structure obtained as a $\mathcal{W}(\mathfrak{gl}_r)$-module descending from a twisted module of the underlying Heisenberg algebra, with the twist given by an automorphism $\sigma = \prod_{j=1}^n \sigma_j$, where the $\sigma_j$ are disjoint cycles of length $\rho_j$ respectively (and thus $r = \sum_{j=1}^n \rho_j$), and dilaton shifts
	\begin{equation}\label{eq:shiftBB}
		K^j_{-s_j} \mapsto K^j_{-s_j} - Q_j, \qquad j=1,\ldots,n,
	\end{equation}
	for some positive integers $s_j$ coprime with $\rho_j$. Suppose that $Q_j \neq 0$ for all $j \in \{1,\ldots,n\}$.

Now consider operators $\tilde H^i_m$ obtained as a $\mathcal{W}(\mathfrak{gl}_{r+1})$-module descending from a twisted module of the underlying Heisenberg algebra, with the twist $\tilde{\sigma}$ given by the same automorphism $\sigma$ but with an extra one-cycle appended, and the dilaton shifts still given by \eqref{eq:shiftBB} (i.e. the extra bosonic modes associated to the one-cycle are not dilaton shifted). More precisely, if the $K^{r+1}_m$ are the bosonic modes associated to the extra one-cycle in $\tilde{\sigma}$, then the operators $\tilde{H}^i_m$  are obtained from the $H^i_m$ as follows:
\begin{equation}\label{e:tildeH}
    \begin{split}
	    &\tilde{H}^1_m = K^{r+1}_m+H^1_m,
        \\
	&r^{i-1} \tilde{H}^i_m = H^i_m + r \sum_{\mathclap{\substack{m_1,m_2 \in \mathbb{Z}\\m_1+m_2 = m-1}}}K^{r+1}_{m_1}H^{i-1}_{m_2}, \qquad i=2,\ldots,r,
        \\
        &
	r^r \tilde{H}^{r+1}_m = r \sum_{\mathclap{\substack{m_1,m_2 \in \mathbb{Z}\\m_1+m_2 = m-1}}} K^{r+1}_{m_1}H^{r}_{m_2}.
    \end{split}
\end{equation}
We set the extra bosonic zero-mode to zero: $K_0^{r+1} = 0$, and assume that $H^1_0=0$.

\begin{theorem}
	\label{t:appending}
	Suppose that there exists an integer partition $\tilde{\lambda}$ of $r+1$ such that $\tilde{\lambda}(i)= \lambda(i)$ for all $i=1,\ldots,r$ and $\tilde{\lambda}(r+1) = \sum_{j=1}^n s_j$. Denote by $\tilde \Lambda_{r+1}$ the associated $\tilde \lambda$-good index set. Then the subset of operators $\tilde H^i_m$ constructed above, with $(i,m) \in \tilde \Lambda_{r+1}$, forms a quantum $(r+1)$-Airy structure.
\end{theorem}
\begin{proof}
We need to determine whether conditions (1) and (2) of Definition \ref{d:airy} are satisfied for the set of operators $\tilde H^i_m$ with $(i,m) \in \tilde \Lambda_{r+1}$. Condition (2) is obviously satisfied by construction, since the set of modes $\tilde{H}^i_m$ is $\tilde \lambda$-good with respect to some partition $\tilde \lambda$ of $r+1$. What we need to check is whether condition (1) is satisfied. That is, we need to make sure that the $\tilde H^i_m$ have no degree zero terms, that the degree one terms only involve positive bosonic modes, and that there exist linear combinations of the modes such that all positive bosonic modes appear exactly once in the degree one terms.

Start with degree zero terms. Clearly, the operators $\tilde H^{r+1}_m$ cannot have degree zero terms. Now consider the operators $\tilde H^i_m$ with $i \in \{1,\ldots,r\}$. Since $\tilde \lambda(i) = \lambda(i)$ for all $i=1,\ldots,r$, it means that for all $i=1,\ldots, r$, $(i,m) \in \tilde \Lambda_{r+1}$ if and only if $(i,m) \in \Lambda_r$. In other words, we keep the same subset of modes for the $\tilde H^i_m$ as we did for the $H^i_m$, for $i=1,\ldots,r$. As a result, since the $H^i_m$ with $(i,m) \in \Lambda_r$  have no degree zero terms (they form a quantum Airy structure), the $\tilde H^i_m$ with $(i,m) \in \tilde \Lambda_r$ for $i=1,\ldots,r$ also have no degree zero terms. 

We move on to degree one terms. We show first that they only involve positive bosonic modes. We start with the modes $\tilde H^1_m$, $m \geq 1$. We know that the modes $H^1_m$, $m \geq 1$ are in the original quantum Airy structure, and thus involve only positive bosonic modes. As a result, the same is true for the modes $\tilde H^1_m$, $m \geq 1$. As for the zero mode, since $H^1_0$ is in the original quantum Airy structure, we must have $H^1_0=0$. We impose that $K^{r+1}_0=0$, and therefore $\tilde H^1_0 = 0$. 

Moving on to the $\tilde H^i_m$ with $i \in \{2,\ldots,r\}$, inspecting  \eqref{e:tildeH} we see that there are two sources of degree one terms: the degree one terms from the $H^i_m$, and potential degree one terms arising from degree zero terms of $H^{i-1}_{m_2}$ in 
\begin{equation}\label{eq:43}
\sum_{\mathclap{\substack{m_1,m_2 \in \mathbb{Z}\\m_1+m_2 = m-1}}}K^{r+1}_{m_1}H^{i-1}_{m_2}.
\end{equation}
Since the $H^i_m$ are in a quantum Airy structure, we know that their degree one terms only involve positive bosonic modes. As for the second source of degree one terms, they will only involve positive bosonic modes $K^{r+1}_{m_1}$, $m_1>0$, if the $H^{i-1}_{m_2}$ with $m_2 \geq m$ do not have degree zero terms. (Here we use the fact that $K^{r+1}_0=0$.) But the $H^{i-1}_{m_2}$ with $m_2 \geq m$ are part of the original quantum Airy structure. Indeed, if $(i,m) \in \Lambda_r$, then $m \geq i - \lambda(i)$. But for any partition, either $\lambda(i) = \lambda(i-1)$ or $\lambda(i) = \lambda(i-1) + 1$. Either way, $\lambda(i) \leq  \lambda(i-1) + 1$, and thus $m \geq i - \lambda(i) \geq i - 1 - \lambda(i-1)$, which means that $(i-1,m) \in \Lambda_r$. Thus the $H^{i-1}_{m_2}$ with $m_2 \geq m$ have no degree zero terms. We conclude that the degree one terms of the $\tilde H^i_m$ with $i \in \{2,\ldots ,r\}$ and $(i,m) \in \tilde \Lambda_{r+1}$ involve only positive bosonic modes.

Consider finally the modes $\tilde H^{r+1}_m$. The only potential  degree one terms arise from degree zero terms of $H^r_{m_2}$ in
\begin{equation}
\sum_{\mathclap{\substack{m_1,m_2 \in \mathbb{Z}\\m_1+m_2 = m-1}}} K^{r+1}_{m_1}H^{r}_{m_2}.
\end{equation}
Using the same argument as above, it is clear that those potential degree one terms will also only involve positive bosonic modes. 

Finally, we need to make sure that there exist linear combinations of the modes such that all positive bosonic modes appear exactly once in the degree one terms. Consider first the degree one terms in the modes $\tilde H^{r+1}_m$.  It is fairly straightforward to calculate that
\begin{equation}
	H_{m_2}^r = A \left( \prod_{j=1}^n (-Q_j)^{\rho_j} \right) \delta_{m_2, r - \sum_j s_j - 1} + \mathcal{O}(1),
\end{equation}
for some number $A$. As a result, $\tilde{H}^{r+1}_m$ has a degree one term of the form
\begin{equation}
	r^r \tilde{H}^{r+1}_m = r A  \left( \prod_{j=1}^n (-Q_j)^{\rho_j} \right)  K^{r+1}_{m - r + \sum_j s_j } + \mathcal{O}(2).
\end{equation}
Since we assume that all $Q_j \neq 0$, this term is non-vanishing. Moreover, since the partition $\tilde \lambda$ is such that $\tilde \lambda(r+1) = \sum_{j=1}^n s_j$, we know that the $\tilde \lambda$-good subset of modes only contain the modes $\tilde H^{r+1}_m$ with $m \geq r+1 - \tilde \lambda(r+1) = r+1 - \sum_{j=1}^n s_j$. We conclude that all positive modes $K^{r+1}_k$ with $k \geq 1$ appear exactly once in the degree one terms of the $\tilde{H}^{r+1}_m$ in the $\tilde \lambda$-good subset.

For the remaining modes $\tilde H^i_m$ with $i \in \{1 ,\ldots, r\}$, the degree one terms always consist of the degree one term of $H^i_m$ plus some linear combination of positive modes $K^{r+1}_k$, $k \geq 1$. As the $H^i_m$ form a quantum Airy structure, we know that all positive bosonic modes associated to the cycles of $\sigma$ appear exactly once in the linear terms of the $H^i_m$, and hence of the $\tilde H^i_m$. Then, by taking linear combinations with the $\tilde H^{r+1}_m$, we can remove the modes $K^{r+1}_k$ from the linear terms. We conclude that condition (1) of Definition \ref{d:airy} is satisfied, and that the $\tilde H^i_m$ form a quantum Airy structure.
\end{proof}

\begin{remark}
In the construction above we set the extra bosonic zero-mode to zero $K^{r+1}_0=0$, and assumed that $H^1_0=0$ so that $\tilde H^1_0=0$. However, it may be interesting to generalize the construction by letting $K^{r+1}_0 = \hbar^{1/2} q = - H^1_0$ for some $q \in \mathbb{C}$, such that $\tilde H^1_0 = 0$. This is for instance what was considered in the special case of Theorem 4.17 in \cite{borot2018higher}. Looking back at the proof of the Theorem above, in particular the paragraph below \eqref{eq:43}, we note that this will be allowed if the modes $H^{i-1}_{i - \tilde \lambda(i)-1}$, for $i \in \{2,\ldots,r+1\}$, do not have degree zero terms. This is necessary so that $K^{r+1}_0$ does not appear as a degree one term in the $\tilde H^i_m$. However, the modes $H^{i-1}_{i-\tilde \lambda(i)-1}$ may or may not be in the original quantum Airy structure (they are if $\tilde \lambda(i) = \tilde \lambda(i-1)$, but they are not if $\tilde \lambda(i) =\tilde \lambda(i-1)+1$). If they are, then they certainly do not have degree zero terms, but if they are not, then it is unclear a priori whether they have degree zero terms. Nevertheless, if for a given example one can check that these modes do not have degree zero terms, then the construction above can be extended by letting $K^{r+1}_0 = \hbar^{1/2} q = - H^1_0$ for some $q \in \mathbb{C}$.
\end{remark}

As an example, we can apply this theorem to the higher quantum Airy structures constructed in Theorem \ref{thm:main1}.

\begin{corollary}
	Consider the quantum $r$-Airy structures constructed in Theorem \ref{thm:main1}. Assuming that all dilaton shifts $Q_j$, $j=1,\ldots,n$ are non-zero, a one-cycle can be appended to the quantum $r$-Airy structures as in Theorem \ref{t:appending} in cases (a) and (b) to produce new quantum $(r+1)$-Airy structures, but not in cases (c) and (d).
\end{corollary}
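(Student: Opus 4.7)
The plan is to apply Theorem \ref{t:appending} directly, case by case, to the four families of Theorem \ref{thm:main1}. Condition (a) of Theorem \ref{t:appending}, namely that all $Q_j \neq 0$, holds by hypothesis, so the whole argument reduces to verifying condition (b): existence of a partition $\tilde{\lambda}$ of $r+1$ with $\tilde{\lambda}(i)=\lambda(i)$ for all $i \leq r$ and $\tilde{\lambda}(r+1)=\sum_{j=1}^{n} s_j = ns$ (all $s_j$ coincide with the single integer $s$ in the Theorem \ref{thm:main1} setting).

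The key observation is that the compatibility condition $\tilde{\lambda}(i)=\lambda(i)$ for $i\leq r$ leaves only two possible shapes for $\tilde{\lambda}$. Writing $\lambda=(\lambda_1,\ldots,\lambda_l)$, a partial-sum comparison forces $\tilde{\lambda}_j = \lambda_j$ for $j<l$ and $\tilde{\lambda}_l \geq \lambda_l$, so the single additional unit must go into one of only two places: either we enlarge the last part, giving $\tilde{\lambda}=(\lambda_1,\ldots,\lambda_{l-1},\lambda_l+1)$, valid iff $\lambda_{l-1}>\lambda_l$ and producing $\tilde{\lambda}(r+1)=l$; or we append a trivial part, giving $\tilde{\lambda}=(\lambda_1,\ldots,\lambda_l,1)$, which is always valid and produces $\tilde{\lambda}(r+1)=l+1$. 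Hence condition (b) holds iff $ns \in \{l, l+1\}$, with $ns=l$ additionally requiring $\lambda_{l-1}>\lambda_l$.

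It remains to run through the four cases of Theorem \ref{thm:main1}. In case (a), $\lambda=(2,1,\ldots,1)$ has $l=n-1$ parts of $r=n$, and $ns=n=l+1$, so appending a unit cell works. In case (b), $\lambda=(\rho+1,\rho,\ldots,\rho,\rho-1)$ has $l=n$ parts of $r=n\rho$ with $\lambda_{n-1}=\rho>\rho-1=\lambda_n$, and $ns=n=l$, so enlarging the last part works. In case (c), $l=2$ and $ns=4$, so neither alternative is available. In case (d), $l=4$ and $ns=4$; enlarging the last part is the only candidate, but $\lambda_{l-1}=\lambda_l=\frac{\rho-1}{2}$ destroys the strict decrease required, so no valid $\tilde{\lambda}$ exists.

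The only step requiring actual argument is the dichotomy in the second paragraph; once this is in hand, the check for each of the four cases is immediate from the explicit partitions computed in the proof of Theorem \ref{thm:main1}.
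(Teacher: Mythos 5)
Your proof is correct and follows essentially the same route as the paper: both reduce the statement to condition (b) of Theorem \ref{t:appending} and check it against the explicit partitions from the proof of Theorem \ref{thm:main1}. Your dichotomy in the second paragraph (the extra unit must either enlarge $\lambda_l$, giving $\tilde{\lambda}(r+1)=l$ and requiring $\lambda_{l-1}>\lambda_l$, or be appended as a new part, giving $\tilde{\lambda}(r+1)=l+1$) is a clean uniform justification of the ``only choice for $\tilde{\lambda}$'' assertions that the paper makes by inspection, but it is the same argument in substance.
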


\begin{proof}
	This follows by inspection of the partitions. 
	
	Case (a) consists of an automorphism $\sigma$ which consists of $r$ disjoint one-cycles, with $s_j=1$ for all $j=1,\ldots,r$. The partition of $r$ is $\lambda = (2,1,\ldots,1)$. Then we must have $\tilde \lambda = (2,1,\ldots,1,1)$, which is a partition of $r+1$ such that $\tilde \lambda(i) = \lambda(i)$ for all $i=1,\ldots,r$ and $\tilde{\lambda}(r+1) = r = \sum_{j=1}^r s_j$.

Case (b) consists of an automorphism $\sigma$ which consists of $n$ disjoint $\rho$-cycles, with $s_j=1$ for all $j=1,\ldots,n$. The partition of $r = n \rho$ is $\lambda = (\rho+1, \rho, \ldots, \rho, \rho-1)$. Then we can construct a partition $\tilde \lambda$ of $r+1$ as $\tilde \lambda = (\rho+1, \rho, \ldots, \rho, \rho)$. It is such that $\tilde{\lambda}(i) = \lambda(i)$ for all $i=1,\ldots,r$, and $\tilde{\lambda}(r+1) = n = \sum_{j=1}^n s_j$.

	However, this doesn't work for cases (c) and (d). Case (c) corresponds to an automorphism $\sigma$ which consists of two disjoint one-cycles, with $s_j=2$ for both. The partition of $r=2$ is $\lambda = (1,1)$. The only possibility for the partition $\tilde{\lambda}$ of $r+1=3$ such that $\tilde \lambda(i) = \lambda(i)$ for $i=1,2$ is $\tilde{\lambda} = (1,1,1)$, but then $\tilde{\lambda}(3) = 3$ which is not equal to $\sum_{j=1}^2 s_j = 4$.

	Case (d) corresponds to two disjoint $\rho$-cycles, with $\rho$ odd, and $s_j=2$ for both cycles. The partition $\lambda$ of $r = 2 \rho$ is $\lambda = \left( \frac{\rho+1}{2}, \frac{\rho+1}{2}, \frac{\rho-1}{2}, \frac{\rho-1}{2} \right)$. The only choice for the partition $\tilde \lambda$ of $r+1 = 2 \rho +1$ such that $\tilde \lambda(i) = \lambda(i)$ for $i=1,\ldots,r$ is $\tilde \lambda = \left(\frac{\rho+1}{2}, \frac{\rho+1}{2}, \frac{\rho-1}{2}, \frac{\rho-1}{2} , 1 \right)$, but then $\tilde{\lambda}(r+1) = 5$, which is not equal to $\sum_{j=1}^2 s_j = 4$.
\end{proof}


\section{Future Directions}
\label{s:future}

In this paper we made a first step towards a classification of higher quantum Airy structures constructed as $\mathcal{W}(\mathfrak{gl}_r)$-modules following the method of \cite{borot2018higher}, by classifying those that arise from twisted modules of the underlying Heisenberg algebra with the twist corresponding to an automorphism with arbitrary disjoint cycles of the same length. We also studied the question of when new higher quantum Airy structures can be constructed by ``appending a one-cycle with no dilaton shift''. 

A few open questions immediately come to mind:
\begin{itemize}
	\item It would be very interesting to complete the classification for arbitrary automorphisms $\sigma$. The key insight that the degree condition \eqref{eq:form} can be thought of as a matrix inversion problem could prove useful.
	\item The proposed interpretation of our quantum $r$-Airy structures as defining topological recursion on reducible spectral curves in Section \ref{s:interpretation} deserves to be studied further. For instance, a residue formulation of topological recursion on reducible spectral curves could potentially be extracted from our quantum $r$-Airy structures. The fact that only particular families of reducible spectral curves seem to have counterparts in our classification of quantum $r$-Airy structures is also intriguing and deserves further investigation.
	\item The quantum $r$-Airy structures constructed as $\mathcal{W}(\mathfrak{gl}_r)$-modules for fully cyclic automorphisms, as in Theorem 4.9 of \cite{borot2018higher}, have natural interpretations in enumerative geometry. They are known to produce generating functions for various flavours of (closed) intersection theory on $\overline{\mathcal{M}}_{g,n}$ (or variants thereof). It would be interesting to explore whether the quantum $r$-Airy structures for more general automorphisms, such as those constructed in Theorem \ref{thm:main1}, also have interesting enumerative geometric interpretations.
	\item The idea of ``appending a one-cycle with no dilaton shift'' to a higher quantum Airy structure has a compelling interpretation in enumerative geometry, for the particular families of higher quantum Airy structures studied in \cite{borot2018higher}. Indeed, while the higher quantum Airy structures for fully cyclic automorphisms from Theorem 4.9 are connected to various flavours of closed intersection theory on $\overline{\mathcal{M}}_{g,n}$ (or variants thereof), the corresponding higher quantum Airy structures obtained by appending a one-cycle, as in Theorem 4.16, are related to the open version of the appropriate intersection theory. ``Appending a one-cycle'' to the higher quantum Airy structures may then be understood as some sort of open/closed correspondence. If an enumerative geometric interpretation for higher quantum Airy structures for arbitrary automorphisms is found, it would be fascinating to see whether such an open/closed correspondence holds for the general procedure of appending a one-cycle studied in Theorem \ref{t:appending}.
\end{itemize}

\end{document}